\documentclass[10pt,twocolumn,twoside]{IEEEtran}
\IEEEoverridecommandlockouts

\usepackage{enumitem}
\usepackage{url}
\usepackage{cite}
\usepackage{amsmath,amssymb,amsfonts}
\usepackage{algorithmic}
\usepackage{graphicx}
\usepackage{textcomp}
\usepackage{balance}
\usepackage{xcolor}
\usepackage{bm}
\usepackage{booktabs}   
\usepackage{multirow}
\usepackage{subcaption}

\usepackage{bbm}
\usepackage[utf8]{inputenc}
\usepackage{array}
\usepackage{amssymb, amsthm, amsfonts}
\newtheorem{proposition}{Proposition}
\newtheorem{lemma}{Lemma}
\newtheorem{corollary}{Corollary}
\newtheorem{theorem}{Theorem}
\def\BibTeX{{\rm B\kern-.05em{\sc i\kern-.025em b}\kern-.08em
    T\kern-.1667em\lower.7ex\hbox{E}\kern-.125emX}}
\usepackage{lipsum}

\newcounter{subassumption}[assumption]

\title{\LARGE \bf 
  Optimal Solar Investment and Operation under Asymmetric Net Metering
}

\author{Nathan Engelman Lado$^\dagger$, Ahmed S. Alahmed$^\ddagger$, Audun Botterud$^\dagger$, Saurabh Amin$^\dagger$
\vspace{-0.5cm}
 \thanks{ 
$^\dagger$LIDS, MIT, Cambridge, MA, USA. Email: {\tt engelm25@mit.edu, audunb@mit.edu, amins@mit.edu}.}
\thanks{$^\ddagger$Electrical Engineering, King Fahd University of Petroleum and Minerals, Dhahran, Saudi Arabia. Email: {\tt alahmad@kfupm.edu.sa}.}
}

\def\beq{\begin{equation}}
\def\eeq{\end{equation}}
\def\bea{\begin{eqnarray}}
\def\eea{\end{eqnarray}}
\def\ba{\begin{array}}
\def\ea{\end{array}}

\def\bitem{\begin{itemize}}
\def\eitem{\end{itemize}}
\def\ben{\begin{enumerate}}
\def\een{\end{enumerate}}

\def\ie{{\it i.e.,\ \/}}

\definecolor{bgrd}{rgb}{1,1,1}
\definecolor{gray}{rgb}{0.5,0.5,0.5}
\definecolor{dkr}{rgb}{0.7,0.1,0.2}
\definecolor{dkb}{rgb}{0.1,0.1,0.8}

\begin{document}
\begingroup
\allowdisplaybreaks

\maketitle

\begin{abstract}
We examine the joint investment and operational decisions of a \emph{prosumer}, a customer who both consumes and generates electricity, under net energy metering (NEM) tariffs. Traditional NEM schemes provide temporally flat compensation at the retail price for net energy exports over a billing period. However, ongoing reforms in several U.S. states are introducing time-varying prices and asymmetric import/export compensation to better align incentives with grid costs.
While prior studies treat PV capacity as exogenous and focus primarily on consumption behavior, this work endogenizes PV investment and derives the marginal value of solar capacity for a flexible prosumer under asymmetric NEM tariffs. 
We characterize optimal investment and show how optimal investment changes with prices and PV costs. Through this analysis, we identify a PV effect: changes in NEM pricing in one period can influence net demand and consumption in generating periods with unchanged prices through adjustments in optimal PV investment. The PV effect weakens the ability of higher import prices to increase prosumer payments, with direct implications for NEM reform. We validate our theoretical results in a case study using simulated household and tariff data derived from historical conditions in Massachusetts.
\end{abstract}

\begin{IEEEkeywords}
Demand response, distributed energy resources, flexible demand, net metering, prosumer, sizing, solar PV.
\end{IEEEkeywords}


\section{Introduction}
The rapid deployment of behind-the-meter distributed energy resources (DERs), particularly rooftop photovoltaic (PV) systems and flexible loads, is transforming electricity consumption and generation patterns at the distribution level. Customers equipped with such resources, commonly termed prosumers, are able to both import electricity from and export surplus generation to the distribution utility (DU)~\cite{Burger&Perez&Arriaga:19CEEPR}. Net energy metering (NEM) has emerged as the predominant tariff mechanism for facilitating bidirectional energy exchange in the United States. Under conventional NEM, a customer's net energy consumption is calculated over a billing period. Net imports are charged at the prevailing retail price and net exports are compensated with credits, typically valued at the retail rate (symmetric NEM) or at a reduced price (asymmetric NEM)~\cite{NEMevolution:23NAS}.%
\footnote{Symmetric and asymmetric NEM designs are also referred to in the literature as full NEM and partial NEM, respectively.} 
Although many existing NEM tariffs remain temporally time-invariant and symmetric, a growing number of jurisdictions are transitioning toward time-varying and asymmetric import/export tariff structures. These reforms aim to better reflect marginal system costs, improve grid efficiency, and align prosumer incentives with distribution system needs~\cite{Next10Report2022,Alahmed&Tong:22ACMSEIR}. Notable examples of asymmetric NEM tariffs include those recently adopted in California, Arizona, Georgia, Michigan, and Illinois~\cite{DSIRE}.

Much of the existing literature on NEM examines either operational DER decisions under fixed capacity or DER adoption in aggregate. Early work studies prosumer behavior, benefits, and PV adoption under legacy NEM tariffs with symmetric prices \cite{Borenstein:15NBER, Darghouth&Barbose&Wiser:11EP,Sun&Feng&Tong:20TAC, Yamamoto:17EP}. Within this literature, operational analyses often treat PV capacity as exogenous and focus on how prosumers schedule consumption and DER operation given a fixed level of installed capacity.

More recent studies consider operational decisions and DER aggregation under time-varying and asymmetric NEM tariffs \cite{Alahmed&Tong:22TSG, Chakraborty&Poolla&Varaiya:19TSG}, but generally maintain the assumption of exogenous PV capacity. Other work endogenizes PV capacity under NEM without incorporating demand flexibility \cite{Comello&Reichelstein:17RSER, Chakraborty&Khargonekar:23CSL, Cerino&Noussan:18JCP}. In contrast, our work endogenizes PV capacity, incorporates prosumer flexibility, and analyzes asymmetric, time-varying NEM tariffs. We show that jointly optimizing investment and operational decisions exposes effects that are not captured when PV capacity is treated as fixed. \vspace{3pt}

\noindent\textbf{\textit{Summary of Contributions:}}
 This paper studies the joint investment and operational decisions of a prosumer with behind-the-meter PV generation and flexible loads under general NEM tariffs. More specifically:
\begin{itemize}[leftmargin=*, labelsep=0.5em]
    \item We demonstrate that optimal PV capacity depends on a weighted average of prices and marginal utilities derived from the import, export, and net-zero (self-consumption) regimes. We then derive comparative statics with respect to changes in import/export prices and solar cost.
    \item We identify the PV effect: when NEM prices change in one period, adjustments to optimal solar capacity affect prosumer consumption, net demand, and payments in all other generation periods.
    \item We show that net-zero self-consumption periods dampen the sensitivity of PV investment to price and cost changes, smoothing capacity responses relative to regimes valued at market prices.

    \item We present a numerical case study using simulated Massachusetts demand and renewable generation. The case study demonstrates why this work is relevant to regulators and utilities by illustrating how NEM tariffs affect prosumer investment, net demand, consumption, and cost recovery.
\end{itemize}

\noindent\textbf{\textit{Paper Notations and Organization:}}
A summary of notations used in the paper is provided in Table \ref{table:notations}. The rest of the paper is organized as follows. Sec.\,\ref{sec:model} presents the prosumer payment and decision model under a general NEM tariff structure. Sec.\,\ref{sec:Optimal} delineates the optimal operation and investment decisions of a prosumer under NEM. Sec.\,\ref{sec:CaseStudy} presents a case study for a prosumer under NEM using simulated data for a household in Massachusetts, followed by concluding remarks in Sec.\,\ref{sec:conclusion}. Mathematical proofs of the theoretical results and explanations of parameter choices for the case study are presented in the appendix.

\begin{table}[t]
\centering
\caption{Notation summary}
\label{table:notations}
\setlength{\tabcolsep}{6pt}
\renewcommand{\arraystretch}{1.1}
\begin{tabular}{llll}
\hline
\multicolumn{4}{l}{\textbf{\underline{Sets}}} \\[2pt]

$\mathcal{T}$ & Periods
& $\mathcal{T}^g\subseteq \mathcal{T}$ & Gen. Periods   \\
\hline
\multicolumn{4}{l}{\textbf{\underline{Parameters}}} \\[2pt]

$c^g\!\in\! \mathbb{R}_{\geq 0}$ & Cost of PV inv.& $\pi_t^+\!\in\! \mathbb{R}_{\geq 0}$ & Import price  \\
$\psi_{t}\!\in\! \mathbb{R}_{\geq 0}$ & Cap. factor
& $\pi_t^-\!\in\! \mathbb{R}$ & Export price  \\
$\bar \psi_t\!\in\! \mathbb{R}_{\geq 0}$ & Cap. factor bnd.
& $\pi^c\in \mathbb{R}$ & Fixed charge \\
$\bar g \!\in\! \mathbb{R}_{\geq 0}$ & PV cap. bnd.
& $U_{t}(\cdot)\!\in\! \mathbb{R}$ & Period-$t$ utility \\

\hline
\multicolumn{4}{l}{\textbf{\underline{Decision Variables}}} \\[2pt]

$d_{t}\!\in\!\mathbb{R}_{\geq 0}$ & Electricity cons.
& $d_{t}^+\!\in\! \mathbb{R}_{\geq 0}$ & Import\\
$g\!\in\! \mathbb{R}_{\geq 0}$ & Solar cap.
& $d_{t}^-\!\in\! \mathbb{R}_{\geq 0}$ & Export \\
\hline
\multicolumn{4}{l}{\textbf{\underline{Auxiliary Quantities}}} \\[2pt]
$P^{\mbox{\tiny NEM}}\!\in\! \mathbb{R}$ & NEM payment
& $\bar d_t^\pm\!\in\! \mathbb{R}_{\geq 0}$ & Imp./Exp. thresh. \\
$\underbar d^+\!\in\! \mathbb{R}_{\geq 0}$ & PV Threshold 
&$S\!\in\! \mathbb{R}$ & Prosumer Surplus\\
\hline
\multicolumn{4}{l}{\textbf{\underline{Dual Variables}}} \\[2pt]
$\gamma_{t}\!\in\! \mathbb{R}$ & Balance constr. dual
& $\alpha_{t}\!\in\! \mathbb{R}_{\geq 0}$ & Imp. nonneg. dual \\
$\beta_{t}\!\in\! \mathbb{R}_{\geq 0}$ & Exp. nonneg. dual
& $\kappa\!\in\! \mathbb{R}_{\geq 0}$ & PV nonneg. dual \\
$\rho\!\in\! \mathbb{R}_{\geq 0}$ & PV cap. constr. dual
&  &  \\

\hline
\end{tabular}
\end{table}
\vspace{-6pt}

\section{Prosumer Decisions Model and Payment}\label{sec:model}

\noindent\textbf{\textit{DER and Payment under NEM:}}
We consider a prosumer facing a finite set of settlement periods $t\in \mathcal{T}$. 
A prosumer consumes electricity,  $d_{t}$, and generates electricity in each period, $\psi_t g$. For the set of generating periods, $\mathcal{T}^g$, the capacity factor is a continuously distributed parameter $\psi_{t}\in [0,\bar \psi_{t}]$. For periods $t\in\mathcal{T}\setminus \mathcal{T}^g$, the capacity factor is zero, \ie $\bar\psi_{t}=0$.

 Energy balance requires that net demand equals the difference between imports and exports. When imbalanced, the prosumer imports/exports electricity to/from the grid, hence,
\vspace{-4pt}
\begin{align*}
d_{t}-\psi_t g=d_{t}^+ - d_{t}^-.
\end{align*}
\vspace{-17pt}

The prosumer's net demand, $d_{t}^+ - d_{t}^-$, in period $  t  $ is positive if consumption exceeds generation ($  d_t\! >\! \psi_t g  $), negative if generation exceeds consumption ($  d_t \!< \!\psi_t g  $), and zero if they are equal ($  d_t\! =\! \psi_t g  $). 
Using the import and export variables, the total payment under NEM is defined as follows:
\vspace{-2pt}
\begin{align}
P^{\mbox{\tiny NEM}}
:=
\pi^c
+
\sum_{t\in\mathcal T}\Big(\pi_t^+\,d_{t}^+ - \pi_t^-\,d_{t}^-\Big).
\label{eq:Pi_NEM_def}
\end{align}\vspace{-6pt}

The expected payment is
$\mathbb E[P^{\mbox{\tiny NEM}}]$, with expectation taken over $\psi_t$. We assume that the import price is greater than the export price, $\pi_t^+>\pi_t^-$, and treat the case of equal prices by subtracting a small offset from $\pi_t^-$.

\vspace{3pt}
\noindent\textbf{\textit{Prosumer Decision Problem:}}
The consumer derives utility $U_{t}(d_{t})$ from electricity consumption. We assume the utility function is increasing, twice-differentiable, and strictly concave. Following standard microeconomics literature, the prosumer's problem is modeled as a surplus maximization problem \cite{Jehle&Reny:11PrenticeHall},  where surplus is the difference between the utility of consuming electricity and incurred charges due to consumption and investment decisions:\vspace{-3pt}
\begin{subequations}\label{eqn:prosumer_prob}
\begin{align}\label{eqn:prosumer_opt}
S^* &=\hspace{-3pt} \underset{\substack{d_{t}^+,d_{t}^-\\g, d_{t}}}{\max} 
\sum_{t \in \mathcal{T}}\mathbbm{E}\Big[U_{t}(d_{t})
\hspace{-2pt}-\hspace{-2pt}\pi_t^+ d_{t}^+ \hspace{-1pt}+\hspace{-1pt} \pi_t^- d_{t}^-\Big]
\hspace{-2pt}-c^g g \hspace{-2pt}- \hspace{-2pt}\pi^c \end{align}\vspace{-12pt}\begin{align}
\text{s.t.}\quad 
&d_{t}^+-d_{t}^-= d_{t}-g\psi_{t}\ :\  \gamma_{t}&&\forall t \in \mathcal{T}\label{eqn:cons_balance}\\
& g\geq0, g\leq\bar{g} \hspace{38pt} :\kappa, \rho \label{eqn:solar_cap}\\
&d_{t}^+ \geq 0,d_{t}^-\geq 0 \hspace{25pt} :\alpha_{t},\beta_{t} &&\forall{t\in\mathcal{T}}\label{eqn:solar_nonneg}\\
& d_{t}^+\cdot d_{t}^- = 0 \quad  (\textit{Bilinear Constraint:})&&\forall t \in \mathcal{T} \label{cons:bilinear}
\end{align}
\end{subequations}
\vspace{-14pt}

The objective \eqref{eqn:prosumer_opt} includes the utility $U_{t}(d_{t})$, the payment, and the cost of investment in PV capacity.
The energy balance constraint \eqref{eqn:cons_balance} ensures that consumption equals net demand after accounting for local generation. 
The capacity limits \eqref{eqn:solar_cap} bound feasible solar investment. Constraints \eqref{eqn:solar_nonneg} impose non-negativity on imports and exports. 

The bilinear constraint \eqref{cons:bilinear} prohibits simultaneous importing and exporting. This constraint makes the problem non-convex. Fortunately, when we assume that import prices are always greater than export prices, $\pi^+_t>\pi^-_t \forall t$, constraint \eqref{cons:bilinear} is always satisfied.
This is formalized in Lemma \ref{lem:bireform} (Appendix \ref{app:bireform }), which we use to reformulate \eqref{eqn:prosumer_opt}-\eqref{eqn:solar_nonneg} without the bilinear constraint. The remaining problem is composed of a concave objective function with linear constraints and is therefore a convex optimization problem. We leverage this in the following section to derive results for the prosumer's optimal operational and investment decisions under NEM.

\vspace{-10pt}
\section{Optimal Sizing and Operational Decisions}\label{sec:Optimal}
In this section, we first describe optimal operational decisions and derive a characterization of optimal prosumer investment. We then show how operational and investment decisions respond to changes in import prices, export prices, and the marginal cost of PV. 

\vspace{3pt}
\noindent\textbf{\textit{Optimal Prosumer Operation:}}
We apply the KKT conditions to \eqref{eqn:prosumer_opt}-\eqref{eqn:solar_nonneg}, yielding the following optimality conditions:
\begin{subequations}
\vspace{-4pt}
\begin{align}
\gamma_{t} &= U'_{t}(d_{t}) = \pi_t^+ + \alpha_{t} = \pi_t^- - \beta_{t}, \label{kkt:1}
\\
\rho - \kappa &= c^g - \sum_{t\in\mathcal T}\mathbbm{E}[ \gamma_{t}\psi_{t}]\label{kkt:2}.
\end{align}
\vspace{-10pt}

From complementary slackness, if $d_{t}^->0$ then $\beta_{t}=0$. Analogously, if $d_{t}^+>0$ then $\alpha_{t}=0$. Thus,
\end{subequations}
\begin{subequations}
\vspace{-4pt}
\begin{align}
d_{t}^->0 &\implies \gamma_{t}=U'_{t}(d_{t})=\pi_t^- \label{kkt:3}\\
d_{t}^+>0 &\implies \gamma_{t}=U'_{t}(d_{t})=\pi_t^+ \label{kkt:4}
\end{align}
\vspace{-13pt}

\end{subequations}  Since $U'_{t}$ is strictly decreasing and invertible by assumption,
\eqref{eqn:cons_balance} and the complementary slackness conditions imply the following piecewise characterization, established in \cite{Alahmed&Tong:22TSG}.
\begin{lemma}[Optimal Operational Decisions]\label{lem:opt_decision}
\begin{subequations}\begin{align}
 &d_{t}^*
\hspace{-2pt}=\hspace{-2pt}
\begin{cases}
\bar d_t^+(\pi_t^+),
& \hspace{-4pt} g\psi_{t}\hspace{-2pt} < \hspace{-2pt}\bar d_t^+(\pi_t^+) \ (\text{import}),\\
\bar d_t^-(\pi_t^-),
& \hspace{-4pt} g\psi_{t} \hspace{-2pt}>\hspace{-2pt} \bar d_t^-(\pi_t^-) \ (\text{export}),\\
g\psi_{t},
& \hspace{-4pt} g\psi_{t}\hspace{-2pt}\in\hspace{-2pt}[ 
 \bar d_t^+(\pi_t^+), \bar d_t^-(\pi_t^-)]\ (\text{net-zero}).
\end{cases}\label{eq:net_demand_new}\\
 &d^{+*}_{t}
\hspace{-2pt}=\hspace{-2pt}
\begin{cases}
\bar d_t^+(\pi_t^+)-g\psi_{t},
&  g\psi_{t} < \bar d_t^+(\pi_t^+) ,\\
0,
& \text{otherwise}.
\end{cases}\label{eq:import_new}\\
 &d_{t}^{-*}
\hspace{-2pt}=\hspace{-2pt}
\begin{cases}
g\psi_{t} -\bar d_t^-(\pi_t^-),
&  g\psi_{t} > \bar d_t^-(\pi_t^-) ,\\
0,
& \text{otherwise},
\end{cases} \label{eq:export_new}
\end{align}\end{subequations}
\end{lemma}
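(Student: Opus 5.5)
Fix $g$ and a realization of $\psi_t$. The plan is to solve the per-period operational subproblem of \eqref{eqn:prosumer_opt}--\eqref{eqn:solar_nonneg}. By Lemma \ref{lem:bireform} the bilinear constraint can be dropped, so the objective separates across $t$ and across realizations. Each period then poses the strictly concave program
\begin{align*}
\max_{d_t,\,d_t^+,\,d_t^-}\ U_t(d_t)-\pi_t^+d_t^+ +\pi_t^-d_t^-
\end{align*}
subject to $d_t^+-d_t^-=d_t-g\psi_t$ and $d_t^\pm\ge 0$. Its constraints are linear, so the KKT conditions \eqref{kkt:1} are necessary and sufficient. Strict concavity of $U_t$ makes the optimal $d_t$ unique, and hence the optimal $d_t^\pm$ are unique too, by \eqref{eqn:cons_balance} and the fact that at most one of them is positive.

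Next, define the thresholds $\bar d_t^+(\pi_t^+):=(U_t')^{-1}(\pi_t^+)$ and $\bar d_t^-(\pi_t^-):=(U_t')^{-1}(\pi_t^-)$. Since $U_t'$ is strictly decreasing and $\pi_t^+>\pi_t^-$, we get $\bar d_t^+<\bar d_t^-$. Concavity is essential here to order the interval.

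The core of the argument is to verify the three cases. In each, I exhibit a primal candidate together with multipliers $(\gamma_t,\alpha_t,\beta_t)$ satisfying \eqref{kkt:1}, nonnegativity, and complementary slackness. By sufficiency, the candidate is then the unique optimum.

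\begin{enumerate}[leftmargin=*]
\item \emph{Import}, $g\psi_t<\bar d_t^+$. Set $d_t=\bar d_t^+$, $d_t^+=\bar d_t^+-g\psi_t>0$, and $d_t^-=0$. Take $\gamma_t=\pi_t^+$ and $\alpha_t=0$, which gives $\beta_t=\pi_t^- -\pi_t^+ $ from \eqref{kkt:1}. This has the wrong sign, which reveals a sign-convention issue in \eqref{kkt:1}. I will therefore rederive the stationarity conditions carefully from the Lagrangian. The correct relations are $\gamma_t=\pi_t^+-\alpha_t=\pi_t^-+\beta_t$ (equivalently, with multipliers entering with opposite signs), so that here $\beta_t=\pi_t^+-\pi_t^->0$. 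I would state this convention explicitly.
\item \emph{Export}, $g\psi_t>\bar d_t^-$. This case is symmetric: set $\gamma_t=\pi_t^-$, $\beta_t=0$, and $\alpha_t=\pi_t^+-\pi_t^->0$.
\item \emph{Net-zero}, $g\psi_t\in[\bar d_t^+,\bar d_t^-]$. Set $d_t=g\psi_t$ and $d_t^\pm=0$, and take $\gamma_t=U_t'(g\psi_t)$. Monotonicity of $U_t'$ gives $\gamma_t\in[\pi_t^-,\pi_t^+]$, so $\alpha_t=\pi_t^+-\gamma_t\ge0$ and $\beta_t=\gamma_t-\pi_t^-\ge0$.
\end{enumerate}

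The three cases partition $[0,\infty)$, which yields \eqref{eq:net_demand_new}--\eqref{eq:export_new}.

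The main obstacle is bookkeeping rather than mathematics. One must get the signs of the dual multipliers consistent with \eqref{kkt:1}, and justify using the reformulation without the bilinear constraint, for which I invoke Lemma \ref{lem:bireform}. A further technical point concerns well-definedness of $(U_t')^{-1}$ at $\pi_t^\pm$, in particular when $\pi_t^-\le 0$ or the price lies outside the range of $U_t'$. I would handle this by assuming that range covers the relevant prices, or else by interpreting the thresholds as $0$ or $+\infty$ via the corresponding boundary case.
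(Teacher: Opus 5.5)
Your proof is correct, but it runs in the opposite direction from the paper's.

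\textbf{The paper's route.} It argues from necessity. It conditions on the sign of $d_t-g\psi_t$ at an optimum and uses only the complementary-slackness implications $d_t^\pm>0\Rightarrow U_t'(d_t)=\pi_t^\pm$. These hold under either sign convention, so the sign problem you found in \eqref{kkt:1} does not affect those steps. It then reads off three facts:
\begin{itemize}
\item importing forces $d_t=\bar d_t^+>g\psi_t$;
\item exporting forces $d_t=\bar d_t^-<g\psi_t$;
\item hence $g\psi_t\in[\bar d_t^+,\bar d_t^-]$ forces $d_t=g\psi_t$.
\end{itemize}
This is shorter and derives the piecewise form rather than guessing it. On its own, however, it does not rule out the net-zero point $d_t=g\psi_t$ when $g\psi_t<\bar d_t^+$ (or $g\psi_t>\bar d_t^-$). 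That exclusion needs the inequality $U_t'(d_t)=\gamma_t\le\pi_t^+$ (resp.\ $\ge\pi_t^-$), which correctly signed multipliers supply. Alternatively, a short perturbation argument works: if $d_t=g\psi_t<\bar d_t^+$, then $U_t'(d_t)>\pi_t^+$, so raising $d_t$ and $d_t^+$ by a small $\varepsilon$ strictly improves the objective.

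\textbf{Your route.} You guess and verify, exhibiting a primal--dual certificate in each region. Combining KKT sufficiency with uniqueness from strict concavity closes both directions at once. It also justifies writing $d_t^*$ as a single-valued function of $g\psi_t$, which the paper tacitly assumes.

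\textbf{Your sign correction is right.} The appendix Lagrangian \eqref{lag:new_final2} attaches $-\alpha_t d_t^+-\beta_t d_t^-$ to a maximization, which is the minimization convention. Consequently \eqref{kkt:1} with $\alpha_t,\beta_t\ge 0$ would force $\pi_t^+\le\gamma_t\le\pi_t^-$, which is infeasible when $\pi_t^+>\pi_t^-$. With the terms entering as $+\alpha_t d_t^+ +\beta_t d_t^-$, one obtains $\gamma_t=\pi_t^+-\alpha_t=\pi_t^-+\beta_t$, exactly as you state. Your multipliers in all three cases then check out.
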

\noindent where thresholds $\bar d_t^+(\pi_t^+)= \big(U'_{t}\big)^{-1}  (\pi_t^+)$ and $\bar d_t^-(\pi_t^-)= \big(U'_{t}\big)^{-1}  (\pi_t^-)$. The regimes in Equation \eqref{eq:net_demand_new}-\eqref{eq:export_new} are illustrated by Figure \ref{fig:opt_consump}. The blue curve shows the optimal consumption, and the green curve shows the optimal net demand given solar output $g\psi_{t}$. In each regime, generation $g\psi_{t}$ is compared to of the critical demand thresholds, 
$\bar d_t^-(\pi_t^-)$ and $\bar d_t^+(\pi_t^+)$. On the left side of Figure \ref{fig:opt_consump}, generation lies below $\bar d_t^+(\cdot)$, and the prosumer imports electricity to compensate for the difference between generation and demand at the import price. 
On the right side of Figure \ref{fig:opt_consump}, generation exceeds $\bar d_t^-(\cdot)$ and
the prosumer consumes to the point $\bar{d}^-_t(\cdot)$, after which additional utility gains are less valuable than the compensation gained by exporting the excess solar. On the left-hand side and right-hand side of the figure, as $g\psi_{t}$ increases, the net demand of the prosumer decreases as generation offsets demand. 
Finally, if local generation lies between the two thresholds, the prosumer is at a net-zero demand from the grid, at the level of PV generation. In this regime, the slope of the net demand curve is zero as the prosumer increases their consumption to exactly match generation.

\begin{figure}
    \centering
        \includegraphics[width=.9\linewidth]{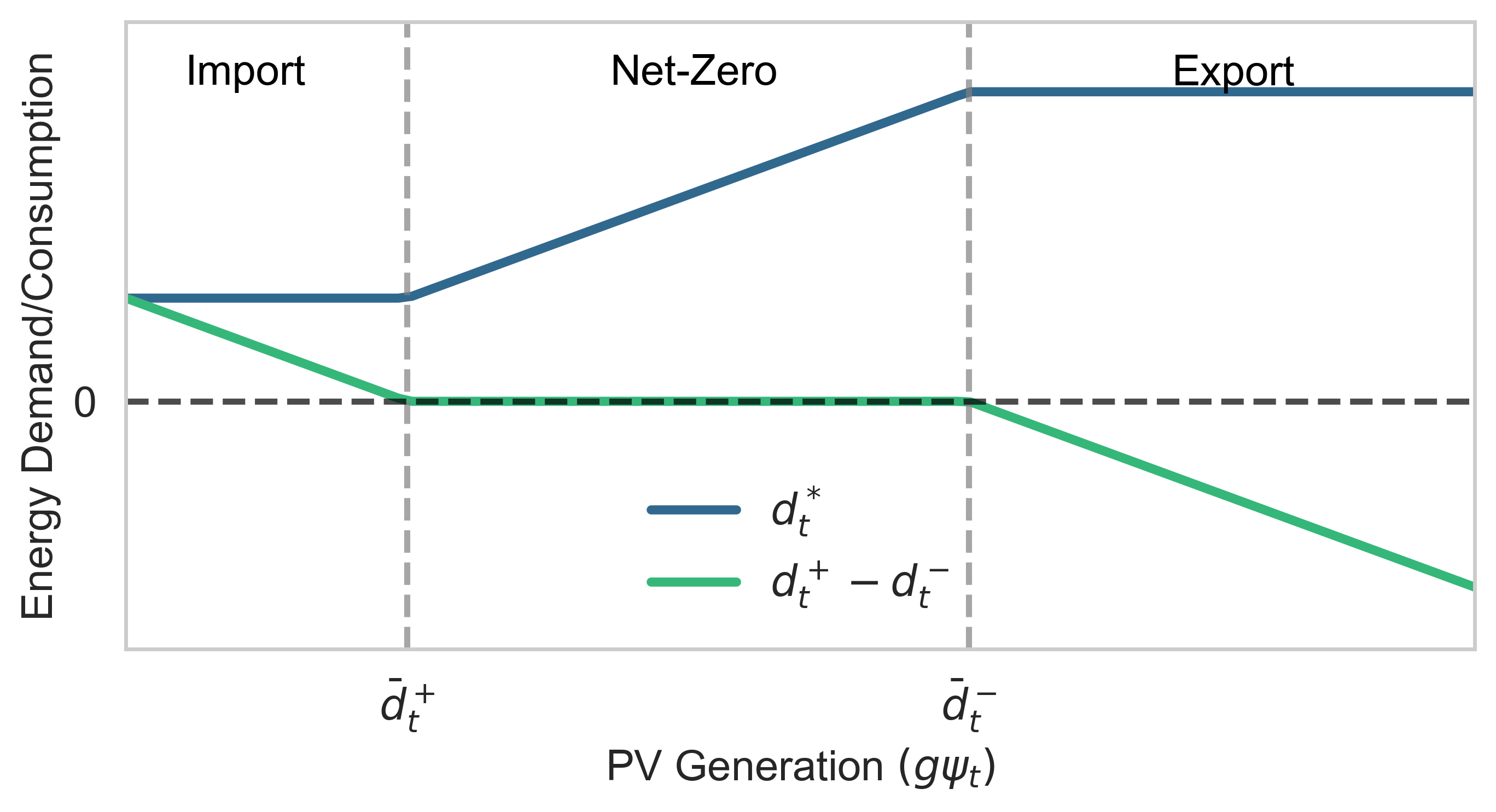} 
    \caption{Optimal consumption (blue) and optimal net demand (green) under NEM..}
    \label{fig:opt_consump}
\end{figure}
\vspace{3pt}

\noindent\textbf{\textit{{Optimal Prosumer Investment}}}
Using \eqref{kkt:1} and \eqref{kkt:2}, we find an equilibrium condition for interior solar investment; when $g^*\in(0,\bar{g})$, $\rho=0$ and $\kappa = 0$ due to complementary slackness. 
We can then characterize the marginal value of solar generation at the optimum and define the marginal value of PV for any level of investment: 

\begin{theorem}[Optimality Condition for Solar Investment]\label{thm:optimality_cond} The optimal PV capacity, $g^*$, is given by
\vspace{-4pt}
\begin{align}g^\ast = \max\{0, \min\{\bar{g}, g^\dagger\}\},\label{eq:optdagger}\end{align}\vspace{-16pt}

    where $g^\dagger$ is defined by the marginal value of interior solar investment determined by 
$\psi$-weighted average of the relevant period prices and marginal utilities:\vspace{-4pt}
\begin{subequations}
\begin{align}
&c^g 
=F(g^\dagger)\label{eq:value_solar_theta_exp}\\
&
F(g) := \mathbbm{E}\Big[
\sum_{t:g\psi_{t} < \bar{d}_t^+} \hspace{-10pt} \psi_{t}  \pi_t^+
 \hspace{-2pt}+  \hspace{-22pt}
\sum_{t:g\psi_{t} \in \big[ \bar{d}_t^+ , \bar{d}_t^-\big]}   \hspace{-23pt}\psi_{t} U'_{t}(\psi_{t}g)
 +  \hspace{-13pt}
\sum_{t:\psi_{t} g >  \bar{d}_t^-}\hspace{-10pt} \psi_{t}  \pi_t^-\Big].\label{eq:value_solar_exp}
\end{align} \end{subequations}
\end{theorem}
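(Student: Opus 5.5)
The plan is to reduce the joint problem to a one-dimensional concave problem in $g$ and then read off the first-order condition. By Lemma \ref{lem:bireform} the bilinear constraint can be dropped, so \eqref{eqn:prosumer_opt}--\eqref{eqn:solar_nonneg} is a convex program. For fixed $g$ it decouples across periods and realizations of $\psi_t$. Lemma \ref{lem:opt_decision} then gives the optimal operation $d_t^*(g,\psi_t)$, $d_t^{\pm*}(g,\psi_t)$ in closed form. Define the reduced value
\begin{align*}
V(g) := \sum_{t\in\mathcal T}\mathbbm{E}\big[v_t(g\psi_t)\big] - c^g g - \pi^c,
\end{align*}
where
\begin{align*}
v_t(x) := \max_{d\geq 0}\; U_t(d) - \pi_t^+ (d-x)^+ + \pi_t^- (x-d)^+ .
\end{align*}
Then $g^*$ maximizes $V$ over $[0,\bar g]$. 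Each $v_t$ is concave in $x$, since it is the partial maximization of a jointly concave function, using $\pi_t^+>\pi_t^-$ so that the piecewise-linear payment term is concave in $(d,x)$. Hence $V$ is concave in $g$.

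The key step is computing $v_t'(x)$ by an envelope argument applied regime by regime via Lemma \ref{lem:opt_decision}:
\begin{itemize}
\item In the import regime, $x<\bar d_t^+$ and $d^*=\bar d_t^+$, so $v_t(x)=U_t(\bar d_t^+)-\pi_t^+(\bar d_t^+-x)$ and $v_t'=\pi_t^+$.
\item In the export regime, $x>\bar d_t^-$ and $v_t'=\pi_t^-$.
\item In the net-zero regime, $d^*=x$, so $v_t(x)=U_t(x)$ and $v_t'=U_t'(x)$.
\end{itemize}
At the thresholds, $U_t'(\bar d_t^\pm)=\pi_t^\pm$, so $v_t'$ is continuous and $v_t$ is $C^1$. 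This matches the multiplier $\gamma_t$ in \eqref{kkt:1}: $\gamma_t=\pi_t^+$ when importing, $\pi_t^-$ when exporting, and $U_t'(g\psi_t)$ at net-zero. Consequently $\frac{d}{dg}\,v_t(g\psi_t)=\psi_t\gamma_t$, and $\sum_t\mathbbm{E}[\gamma_t\psi_t]=F(g)$ as written in \eqref{eq:value_solar_exp}.

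The main obstacle is justifying the interchange of differentiation and expectation, that is, $\frac{d}{dg}\mathbbm{E}[v_t(g\psi_t)]=\mathbbm{E}[\psi_t v_t'(g\psi_t)]$. I will handle this by dominated convergence. The difference quotients of $v_t(g\psi_t)$ are bounded by $\psi_t\max\{\pi_t^+,|\pi_t^-|\}\le\bar\psi_t\max\{\pi_t^+,|\pi_t^-|\}$, because $v_t'$ is monotone and lies between $\pi_t^-$ and $\pi_t^+$, and $\psi_t$ is bounded. Continuity of the distribution of $\psi_t$ makes the regime boundaries measure zero, although the $C^1$ property already suffices. Thus $V'(g)=F(g)-c^g$, and since $V$ is concave, $F$ is nonincreasing.

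Finally, I will conclude with the KKT condition \eqref{kkt:2}, $\rho-\kappa=c^g-F(g^*)$, together with complementary slackness.
\begin{itemize}
\item If some $g^\dagger\in(0,\bar g)$ solves \eqref{eq:value_solar_theta_exp}, then $\kappa=\rho=0$ is consistent and $g^*=g^\dagger$.
\item If $F(0)\le c^g$, then $F(g)\le c^g$ for all $g\ge 0$ by monotonicity, so $\kappa\ge0$ and $g^*=0$. This corresponds to $g^\dagger\le0$, with $g^\dagger$ understood as the root, or $-\infty$ if none exists.
\item If $F(\bar g)\ge c^g$, then $\rho\ge0$ and $g^*=\bar g$.
\end{itemize}
Together these cases give the clipping formula \eqref{eq:optdagger}. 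By monotonicity of $F$ the root $g^\dagger$ can be taken as any point of the (interval) solution set, and optimality of any KKT point follows from convexity.
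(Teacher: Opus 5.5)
Your proof is correct, but it takes a different route from the paper's.

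\textbf{The paper's route.} It works directly with the Lagrangian of the full stochastic program. It takes the $g$-stationarity condition \eqref{kkt:2} and sets $\kappa=\rho=0$ at an interior optimum. It then identifies $\gamma_t$ regime by regime from \eqref{kkt:1}, \eqref{kkt:3}--\eqref{kkt:4} and Lemma~\ref{lem:opt_decision}, which gives $c^g=F(g^\dagger)$. Finally it simply asserts that the optimum is the projection of $g^\dagger$ onto $[0,\bar g]$.

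\textbf{Your route.} You eliminate the operational variables and reduce to the one-dimensional value $V(g)$. You prove $V$ is concave, using joint concavity of the payment term when $\pi_t^+>\pi_t^-$. You then get $V'(g)=F(g)-c^g$ from a regime-wise envelope computation plus dominated convergence. The identification of the marginal value in each regime is the same as the paper's.

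\textbf{What your route buys.}
\begin{itemize}
\item The clipping formula \eqref{eq:optdagger} silently requires $F$ to be nonincreasing. You obtain this directly from concavity of $V$. The paper only argues it later, in the proof of Corollary~\ref{corr:optdaggernvert}, through a derivative formula that relies on continuity of the distribution of $\psi_t$.
\item Because $V$ is concave and one-dimensional, the first-order conditions are sufficient as well as necessary.
\item You never need to justify the existence of KKT multipliers for a program with random recourse variables.
\item Because each $v_t$ is $C^1$ with bounded derivative, continuity of the distribution of $\psi_t$ is not needed for this theorem at all.
\item You state explicitly the convention for $g^\dagger$ when $F(g)=c^g$ has no root, and you note that $g^\dagger$ may be any point of an interval of roots. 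The statement itself glosses over both.
\end{itemize}

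\textbf{What the paper's route buys.} It is shorter, and it keeps the dual variable $\gamma_t$ in view. The paper reuses $\gamma_t$ to interpret the marginal value of PV and in the sensitivity analysis that follows.
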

\vspace{-3pt}

Theorem \ref{thm:optimality_cond} states that, at optimality, the marginal value of solar is equal to the marginal cost if the optimal capacity is between the maximum and minimum PV capacities. If it is at the maximum (minimum), the marginal value can exceed (resp., be less than) the marginal cost. In \eqref{eq:value_solar_exp}, the marginal kW of solar capacity is valued by its capacity-factor-weighted contribution across periods: either displacing imports at price $\pi^+_t$, enabling additional consumption valued at marginal utility, or generating exports compensated at $\pi^-_t$.  As $g\psi_t$ increases from zero, period $t$ begins in the import regime, moves to the net-zero regime, and finally enters the export regime, with some periods incapable of reaching the net-zero or export regimes if $\bar \psi_{t}\bar g$ is smaller than $\bar d_t^+$ or $\bar d_t^-$, respectively. Because $\psi_t$ is a continuous random variable distributed in $[0,\bar\psi_{t}]$, any period with non-zero demand that can be in the export term has a non-zero probability associated with belonging to the import or net-zero terms. Figure \ref{fig:MarginalValue} shows a plot of the marginal value curve when there is only one period in $\mathcal{T}^g$. The flat section before $\underbar{d}^+$ corresponds to when no period is in the net-zero region. After this point, the marginal value curve in Figure \ref{fig:MarginalValue} is strictly decreasing, implying a unique solution when $g^*>\underbar d^+$, where $\underbar{d}^+ = \min_t \bar {d}_t^+$.
\begin{figure}
    \centering
    \includegraphics[width=\linewidth]{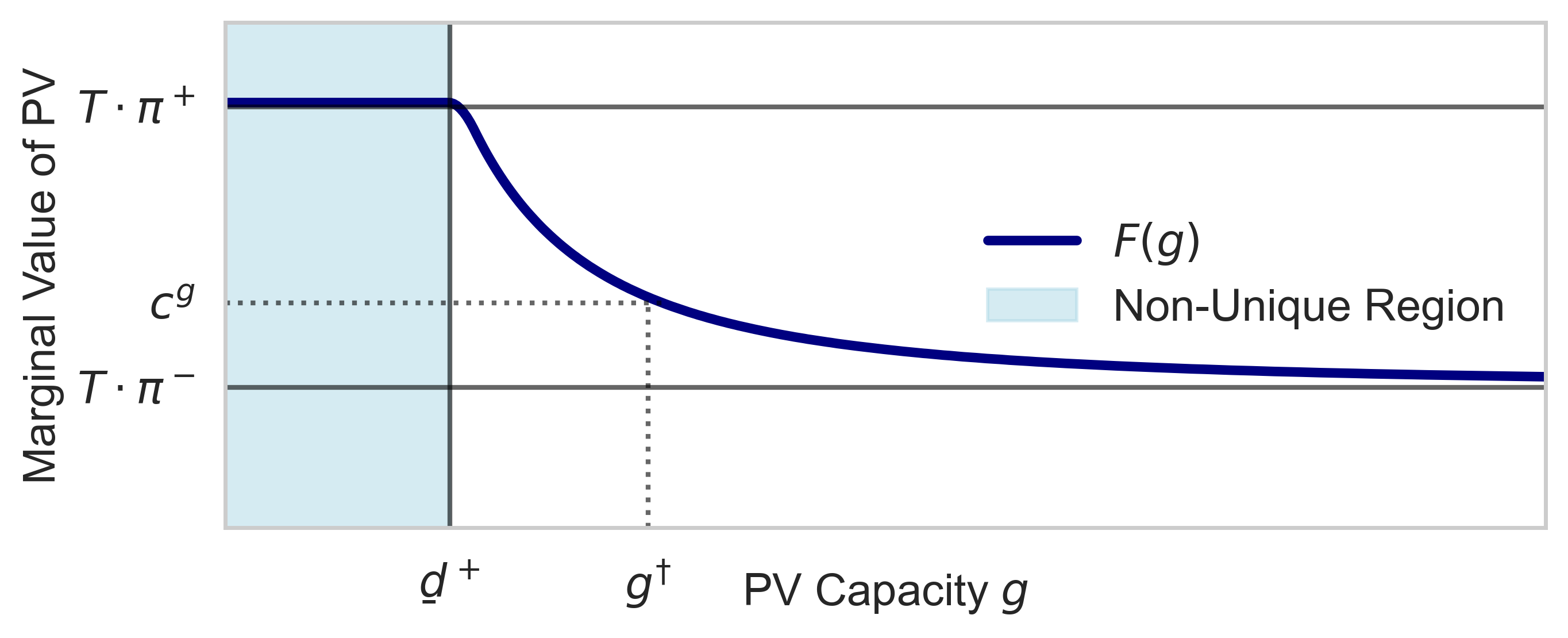}
    \caption{The marginal value of an additional unit of solar for a prosumer. }
    \label{fig:MarginalValue}
\end{figure}
The structure evident in Figure \ref{fig:MarginalValue}, a region of flat marginal value followed by a strictly decreasing region, describes the structure of marginal value curves beyond this illustrative example. Indeed, whenever we have a period for which there is a positive probability of the net-zero region, $F(g)$ is strictly decreasing in $g$. The implications of this are stated in the following corollary:  

\begin{corollary}\label{corr:optdaggernvert}
Equations \eqref{eq:value_solar_theta_exp}-\eqref{eq:value_solar_exp} have a unique solution whenever $c^g \neq F(0)$. If $c^g=F(0)$, the optimal solution is set valued, $g^*\in[0,\underbar{d}^+]$, and $F(g)$ is constant.
\end{corollary}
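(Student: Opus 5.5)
The plan is to study the shape of $F$ on $[0,\infty)$ and then read off the solution set of $c^g=F(g)$ from it, using Theorem~\ref{thm:optimality_cond} to pass from $g^\dagger$ to $g^*$.

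\textbf{Step 1: rewrite $F$ as a sum of per-period terms.} For each $t\in\mathcal T^g$ and each realization of $\psi_t$, define
\[
\phi_t(g,\psi_t)=\psi_t\,\max\{\pi_t^-,\min\{\pi_t^+,U_t'(\psi_t g)\}\}.
\]
By Lemma~\ref{lem:opt_decision} and the monotonicity of $U_t'$, this clipped expression is exactly the summand in \eqref{eq:value_solar_exp}: it equals $\psi_t\pi_t^+$ when $g\psi_t<\bar d_t^+$, equals $\psi_t U_t'(\psi_t g)$ in the net-zero regime, and equals $\psi_t\pi_t^-$ when $g\psi_t>\bar d_t^-$. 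Periods outside $\mathcal T^g$ contribute zero. Hence $F(g)=\sum_t\mathbb E[\phi_t(g,\psi_t)]$.

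\textbf{Step 2: $F$ is continuous and nonincreasing.} Each $\phi_t$ is continuous in $g$, since $U_t'$ is continuous and the clipping is continuous. Each $\phi_t$ is also nonincreasing in $g$, because $U_t'$ is strictly decreasing. Since $\psi_t\le\bar\psi_t$, $\phi_t$ is bounded, so dominated convergence makes $F$ continuous and nonincreasing.

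\textbf{Step 3: $F$ is flat exactly up to $\underbar d^+$.} For $g$ with $g\bar\psi_t<\bar d_t^+$ for every $t$, every period is almost surely in the import regime. Then $F(g)=\sum_t\pi_t^+\mathbb E[\psi_t]=F(0)$, so $F$ is constant near $0$; the paper's normalization calls the end of this flat region $\underbar d^+$.

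\textbf{Step 4: strict decrease beyond the flat region.} This is the main obstacle. Take $g_1<g_2$ past the flat region. Then there is some $t$ for which the event $\{\psi_t g\in(\bar d_t^+,\bar d_t^-)\}$ has positive probability for $g$ in $[g_1,g_2]$; this uses that $\psi_t$ is continuously distributed on $[0,\bar\psi_t]$ with positive density. On that event, $\phi_t(g_2,\psi_t)<\phi_t(g_1,\psi_t)$, because $U_t'$ is strictly decreasing and $\psi_t>0$. Since $\phi_t$ is nonincreasing pointwise everywhere else, taking expectations gives $F(g_2)<F(g_1)$.

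I expect the delicate point to be the far tail, where every period could be almost surely in export for all large $g$. That would make $F$ flat again at $\sum_t\pi_t^-\mathbb E[\psi_t]$. I would handle this by arguing that $\psi_t$ has support reaching down to $0$, so that $P(\psi_t g<\bar d_t^-)>0$ for every finite $g$. This is the paper's remark that export periods always carry positive probability of the net-zero regime. So strict decrease persists on all of $(\underbar d^+,\infty)$.

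\textbf{Step 5: conclude.} $F$ equals $F(0)$ on $[0,\underbar d^+]$ and is continuous and strictly decreasing afterwards. If $c^g\neq F(0)$, the equation $c^g=F(g)$ has at most one root, and it lies in the strictly decreasing part. If no root exists, then either $c^g>F(0)$, which gives $g^*=0$, or $c^g<\inf F$, which gives $g^*=\bar g$. In either case the clipping in \eqref{eq:optdagger} yields a unique $g^*$. If $c^g=F(0)$, the root set is exactly $[0,\underbar d^+]$, on which $F$ is constant. Since the problem is convex, every point of this set satisfies the KKT conditions \eqref{kkt:1}--\eqref{kkt:2} with $\rho=\kappa=0$, so $g^*\in[0,\underbar d^+]$, intersected with $[0,\bar g]$.
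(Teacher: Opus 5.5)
Your proof is correct, but it follows a different route from the paper's. The paper differentiates under the expectation to get $\partial F/\partial g=\mathbb E\big[\sum_t\psi_t^2U_t''(\psi_t g)\,\mathbbm{1}\{\bar d_t^+<\psi_t g<\bar d_t^-\}\big]$. It then reads strict decrease on $(\underline d^+,\bar g)$ off the sign of this derivative, and simply asserts that some period has positive net-zero probability for every $g>\underline d^+$. You instead write each summand as $\psi_t$ times $U_t'(\psi_t g)$ clipped to $[\pi_t^-,\pi_t^+]$. You get continuity and monotonicity of $F$ from dominated convergence, and strict decrease from a pointwise comparison on the event $\{\psi_t g_1\in(\bar d_t^+,\bar d_t^-)\}$.

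Your version has several advantages. It needs only that $U_t'$ is continuous and strictly decreasing, and it never has to justify swapping derivative and expectation. It also supplies the continuity behind existence of a root via the intermediate value theorem. Finally, it exposes what the paper's claim about a period with positive net-zero probability actually uses: a positive density on all of $[0,\bar\psi_t]$, including near $0$, which is what keeps $F$ from flattening again at large $g$. In exchange, the paper's route produces the explicit formula for $\partial F/\partial g$, which is reused as the denominator in Proposition~\ref{prop:monotonicity}; your argument establishes the shape of $F$ but not that formula.

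Two small remarks. First, in Step 3, state plainly rather than citing a normalization that your flat region ends at $\min_{t\in\mathcal T^g}\bar d_t^+/\bar\psi_t$. This equals the paper's $\underline d^+=\min_t\bar d_t^+$ only when $\bar\psi_t=1$ on the generating periods and the minimum is taken over $\mathcal T^g$, an identification the paper's own proof also makes silently. Second, in Step 5 the no-root case should read $c^g\le\inf F$, since that infimum is not attained on the strictly decreasing tail. Alternatively, work on $[0,\bar g]$, where $F(\bar g)$ is attained.
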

Theorem \ref{thm:optimality_cond} characterizes optimal PV investment in closed form, and Corollary \ref{corr:optdaggernvert} describes when optimal investment is unique or set-valued. These findings have key implications for the smoothness of investment changes in the sensitivity analysis.

\noindent\textbf{\textit{Sensitivity Analysis and Comparative Statics:}}
Using the characterization of optimal investment in Theorem \ref{thm:optimality_cond} and Corollary \ref{corr:optdaggernvert}, we perform a sensitivity analysis to better understand how changing prices cause shifts in optimal investment, optimal demand, and payments.
For interior solutions, the equality in \eqref{eq:value_solar_theta_exp} must hold even when parameters shift. Thus, the derivative of $F(g^*)-c^g$ with respect to prices and PV cost is equal to zero for interior solutions. We take the derivative, apply the chain rule, and rearrange terms to find:
 \vspace{-1pt}
\begin{align}
\frac{d g^*}{d  \pi_{\tau}^{\pm}}
\!=\!
\frac{
-\frac{\partial }{\partial \pi_{ \tau}^{\pm}}F(g^*)
}{
\frac{\partial }{\partial g}F(g^*)
},
\ 
\frac{d g^*}{d c^g}
\!=\!
\frac{
1
}{
\frac{\partial }{\partial g}F(g^*)
}.
\label{eq:gprime_pi_plusminus}
\end{align}
\vspace{-6pt}

Recall that $F(g^*)$ is decreasing in $g^*$.
Additionally, from \eqref{eq:value_solar_theta_exp}-\eqref{eq:value_solar_exp}, we can see that the $F(\cdot)$ is increasing in $\pi_\tau^\pm$ and must increase linearly with $c^g$. Using these partial derivatives, we can identify how much optimal PV capacity changes with prices and PV costs:
\begin{proposition}\label{prop:monotonicity}
Optimal investment is weakly increasing in $\pi^\pm_\tau$ and weakly decreasing in $c^g$. We further break down the movement of $g^*$ for each of the following cases:

\textbf{Case 1 (Capacity at bounds):} If $
c^g \not \in [F(\bar g), F(0)]
$, then $g^*\in\{0,\bar g\}$ and
$
d g^*/d \pi_\tau^+
=
d g^*/d \pi_\tau^-
=
d g^*/d c^g
=0.
$

\textbf{Case 2 (Interior capacity):} If $c^g \in (F(\bar g),F(0))$
and $g^*\not\in\{\bar d_t^+\}_{t\in\mathcal{T}^g}$. Then
\begin{subequations}
\begin{align}
\frac{d  g^*}{d  \pi_\tau^+}
&=
\frac{-\mathbbm E[\psi_\tau\cdot\mathbbm{1}\{\psi_\tau g^*<\bar d_\tau^+\}]}
{\mathbbm E[\sum_{t:\,\psi_t g^* \in (\bar d_t^+,\bar d_t^-)}
\psi_t^2\,U_t''(\psi_t g^*)]}
\;>\;0,
\label{eq:dg_dpi_plus_explicit}\\[6pt]
\frac{d  g^*}{d  \pi_\tau^-}
&=
\frac{-\mathbbm E[\psi_\tau\cdot\mathbbm{1}\{\psi_\tau g^*>\bar d_\tau^-\}]}
{\mathbbm E[\sum_{t:\,\psi_t g^* \in (\bar d_t^+,\bar d_t^-)}
\psi_t^2\,U_t''(\psi_t g^*)]}
\;>\;0,
\label{eq:dg_dpi_minus_explicit}\\[6pt]
\frac{d  g^*}{d  c^g}
&=
\frac{1}{
\mathbbm E[\sum_{t:\,\psi_t g^* \in (\bar d_t^+,\bar d_t^-)}
\psi_t^2\,U_t''(\psi_t g^*)]}
\;<\;0.
\label{eq:dg_solar_price_explicit}
\end{align}
\end{subequations}
When $g^*\in\{\bar d_t^+\}_{t\in\mathcal{T}^g}$, replace these with the corresponding left/right directional derivatives.

\textbf{Case 3 (Entry/Exit points):} If 
$
F(0)=c^g,
$
 the left-and right-hand derivatives are $0$ and \eqref{eq:dg_dpi_plus_explicit}-\eqref{eq:dg_dpi_minus_explicit} for $\pi_\tau^\pm$, and \eqref{eq:dg_solar_price_explicit} and $0$ for $c^g$. The directional derivatives are reversed when $
F(\bar g)=c^g
$.
\end{proposition}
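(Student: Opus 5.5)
We prove the proposition by treating $g^*$ as the solution of the one-dimensional problem characterized in Theorem~\ref{thm:optimality_cond}. We then differentiate $F$ in $g$ and in the parameters, using the continuity of the distribution of $\psi_t$ to handle the regime boundaries.

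\textbf{Step 1: derivatives of $F$.} We will write $F(g)=\sum_t \mathbbm{E}[\phi_t(\psi_t g)\psi_t]$, where
\[
\phi_t(x)=\pi_t^+ \ \text{for } x<\bar d_t^+,\qquad \phi_t(x)=U_t'(x) \ \text{for } x\in[\bar d_t^+,\bar d_t^-],\qquad \phi_t(x)=\pi_t^- \ \text{for } x>\bar d_t^-.
\]
This $\phi_t$ is continuous, since $U_t'(\bar d_t^\pm)=\pi_t^\pm$, and nonincreasing. Hence $F$ is continuous and nonincreasing in $g$. Differentiating under the expectation, which is justified by dominated convergence because $\psi_t\le\bar\psi_t$ and $U_t''$ is bounded on compacts, gives
\[
\partial_g F(g)=\mathbbm{E}\Big[\sum_{t:\psi_t g\in(\bar d_t^+,\bar d_t^-)}\psi_t^2U_t''(\psi_t g)\Big]\le 0 .
\]
The kinks of $\phi_t$ occur on events $\{\psi_t g=\bar d_t^\pm\}$, which have probability zero. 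The only issue is one-sided behaviour at $g=\bar d_t^+/\psi$ levels; this is the reason for the directional-derivative caveat.

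For the parameter derivatives, $\pi_\tau^+$ moves both the threshold $\bar d_\tau^+$ and the value $\phi_\tau$ on the import region. By continuity of $\phi_\tau$ at the threshold, the boundary terms cancel (an envelope/Leibniz argument). This leaves
\[
\partial_{\pi_\tau^+}F=\mathbbm{E}\big[\psi_\tau\mathbbm 1\{\psi_\tau g<\bar d_\tau^+\}\big]\ge0,
\]
and similarly $\partial_{\pi_\tau^-}F=\mathbbm{E}\big[\psi_\tau\mathbbm 1\{\psi_\tau g>\bar d_\tau^-\}\big]\ge0$. The derivative of $c^g$ enters only through the left-hand side of \eqref{eq:value_solar_theta_exp}.

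\textbf{Step 2: monotonicity and the three cases.} Weak monotonicity follows without differentiability. If $\pi_\tau^\pm$ increases, $F$ shifts pointwise upward. If $c^g$ decreases, the level $c^g$ falls. Since $g^*$ is the (clamped) crossing point of a nonincreasing $F$ with the level $c^g$ (Corollary~\ref{corr:optdaggernvert}), the crossing moves weakly right in both cases.

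\emph{Case~1.} If $c^g\notin[F(\bar g),F(0)]$, the strict inequality persists under small perturbations by continuity of $F$ in its parameters. The clamp in \eqref{eq:optdagger} therefore keeps $g^*$ at $0$ or $\bar g$, so all derivatives vanish.

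\emph{Case~2.} Here $g^*$ is interior and $g^*>\underbar d^+$; otherwise $F$ is flat, which would give $c^g=F(0)$. Hence some period has positive probability of lying in the net-zero region with $U''<0$, so $\partial_gF(g^*)<0$. The implicit function theorem applied to $F(g)-c^g=0$ then gives \eqref{eq:gprime_pi_plusminus}, and substituting the Step~1 derivatives yields \eqref{eq:dg_dpi_plus_explicit}--\eqref{eq:dg_solar_price_explicit}.

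For the strict signs, we need the numerators to be positive. In the import case this holds because $g^*<\bar d_\tau^+/\psi$ has positive probability for continuously distributed $\psi_\tau$ near $0$. In the export case we need $\tau$ to have a positive export probability at $g^*$; we will state this as an implicit assumption, since otherwise the derivative is $0$. At points $g^*\in\{\bar d_t^+\}$, $\partial_gF$ may differ on each side. We therefore apply the one-sided implicit function argument separately on each side, which gives the left and right derivatives.

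\emph{Case~3.} When $F(0)=c^g$, $F$ is flat on $[0,\underbar d^+]$, and the solution set is the whole interval by Corollary~\ref{corr:optdaggernvert}. We will adopt a selection convention for $g^*$. An upward perturbation of $\pi^\pm_\tau$ (or a downward perturbation of $c^g$) pushes the crossing into the strictly decreasing region, where the Case~2 formulas apply. An opposite perturbation sends $g^*$ to $0$, with derivative $0$. This gives the stated one-sided derivatives. The case $F(\bar g)=c^g$ is symmetric with the roles reversed.

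\textbf{Main obstacle.} The delicate part is justifying the parameter derivatives of $F$, namely that the moving thresholds contribute no boundary terms, and handling nondifferentiability at the kink points and at the set-valued entry point. We handle the first by continuity of $\phi_t$ together with the absolute continuity of $\psi_t$. We handle the second by the one-sided implicit function argument and the selection convention.
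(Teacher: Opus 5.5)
Your proposal is correct and follows essentially the same route as the paper. Both arguments implicitly differentiate $F(g^*)-c^g=0$, computing $\partial_g F$ and $\partial_{\pi_\tau^\pm}F$ under the expectation (regime-boundary events having probability zero); both handle Case~1 by persistence of the strict inequality, and Case~3 by one-sided perturbation off the boundary. In Case~3, exactly as in the paper, what you actually obtain is the derivative of $g^*$ just past the entry point, since $g^*$ itself jumps by up to $\underline{d}^+$ there. Your additions are sound refinements rather than a different method: continuity of $\phi_t$ at the thresholds to kill the boundary terms when $\bar d_\tau^\pm$ moves; the derivative-free crossing argument for weak monotonicity, which handles that entry jump directly rather than via derivative signs; and the caveat that strict positivity of $dg^*/d\pi_\tau^-$ needs positive export probability, whereas the paper's own computation only yields $\ge 0$.
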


Proposition \ref{prop:monotonicity} establishes that optimal solar capacity is weakly decreasing in PV costs and weakly increasing in both import and export prices. For interior solutions, the response of $g^*$ is smooth and governed by the marginal value condition in Theorem~\ref{thm:optimality_cond}: changes in prices or costs must be offset by adjustments in capacity so as to preserve the equality between the expected marginal value of PV and its marginal cost. 

The expressions in Case~2 make explicit how this adjustment depends on which operational regimes are active. Capacity changes are larger when periods have greater probability of being in the import or export regimes and smaller when periods have greater probability of being in the net-zero regime. Since marginal value in the net-zero region is given by $U'(\cdot)$ and is decreasing, additional capacity delivers declining value, which dampens the overall response of net demand to price changes. This causes the reduced sensitivity of net demand when generation is likely to fall in the net-zero regime.

The remaining cases complete the characterization of the capacity response to changes in parameters. When $c^g\notin[F(\bar g),F(0)]$, optimal capacity is at the bounds and small price changes have no effect on PV capacity. Around the entry point $F(0)=c^g$, the optimal capacity changes discretely, but the jump is limited to $\underline d^+$ and preserves monotonicity in all variables. Thus, Proposition~\ref{prop:monotonicity} implies that parameter changes induce at most one discrete adjustment in PV capacity, followed by a region of continuous response.

Changes in PV capacity have important implications for how operational decisions are altered by price changes. We demonstrate the effect of PV changes by decomposing the impact of an import price change on the optimal demand into direct and PV effects:
\begin{figure}
    \centering
    \includegraphics[width=\linewidth]{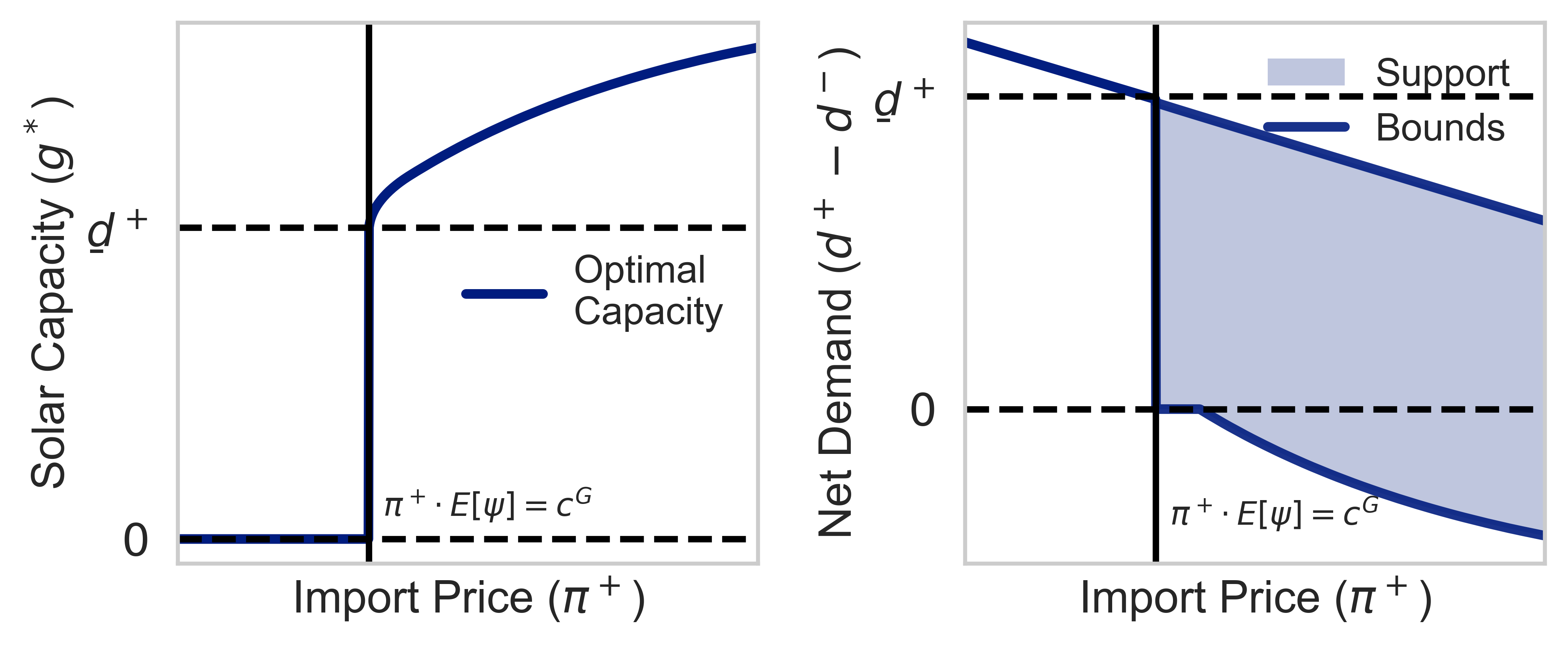}
    \caption{\textbf{Left:} Optimal solar as a function of the import price $\pi^+$. \textbf{Right:} Distribution of net demand for a single period as a function of the import price.}
    \label{fig:Awarepassive}
\end{figure}
\begin{align}
&\frac{d }{d \pi_\tau^{+}}\,
\mathbb E\!\left[d_{t}^{+*}-d_{t}^{-*}\right]\label{eq:nettrade_derivative_final}\\
&=
\underbrace{\mathbbm{1}\{t=\tau\}\cdot
\frac{\mathbb P\!\left(
\psi_{\tau} g^*
<
\bar d_{\tau}^{+}
\right)}{U''_{\tau}\!\big(\bar d_{\tau}^{+}\big)}
}_{\text{direct effect}}
\underbrace{
-
\mathbb E\Big[
\hspace{-12pt}
\sum_{t:\,\psi_{t}g^*<\bar d_{t}^+}\hspace{-15pt}\psi_{t}
\hspace{6pt}+\hspace{-10pt}
\sum_{t:\,\psi_{t}g^*>\bar d_{t}^-}\hspace{-15pt}\psi_{t}
\Big]
\!\cdot\!
\frac{\partial  g^*}{\partial \pi_\tau^{+}}
}_{\text{PV effect}}\!.
\nonumber
\end{align}
The decomposition in \eqref{eq:nettrade_derivative_final} highlights two distinct channels through which price changes affect net demand. The direct effect captures the within-period response: a change in $\pi_\tau^+$ alters $\bar d_t^+$ only for period $t=\tau$ and only when the prosumer is importing. The PV effect captures the effect of a change in optimal PV capacity adjustment: when prices change prosumers adjust solar investment, which in turn shifts net demand across all periods with positive generation. 

Figure~\ref{fig:Awarepassive} visualizes these effects using a one period example. The left panel shows how optimal solar investment responds to changes in $\pi^+$ including the discrete jump in capacity when $c^g=F(0)$, the a monotonic increase of PV capacity after this jump, and eventual saturation of PV capacity at $\bar g$ (not depicted). The right panel shows how these investment responses translate into net-demand outcomes. When PV capacity is zero, each $\pi^+$ corresponds to a single net-demand value. Once $g^*>0$, net demand spans a range due to variation in $\psi$. The downward slope of the upper bound reflects the direct effect of demand response to higher import prices, while the downward movement of the lower bound reflects the PV effect: as capacity increases, net demand continues to fall even in exporting states where the prosumer is not directly exposed to the import price.

Taken together, the analysis exposes four phenomena. First, price changes induce a direct, within-period response in net demand when the prosumer is exposed to the relevant price. Second, through the PV effect, price changes also alter net demand in other periods and regimes by shifting optimal solar capacity. Third, optimal PV investment exhibits a non-smooth entry threshold, followed by a continuous adjustment region and eventual saturation, which governs when the PV effect becomes active and how strongly it operates. Fourth, net-zero self-consumption periods dampen the investment response to price and cost changes, as additional generation is valued at declining marginal utility rather than market prices.

Endogenizing PV capacity also has important implications for payments, a central object for tariff design. For increases in the import or export price, treating PV capacity as fixed overestimates the resulting change in payment, as it ignores the capacity adjustment that lower net demand across generating periods. With endogenous PV, increases in capacity lower net demand, causing payment to rise less, or fall more, in response to import/export price increases than in the fixed-capacity case. Proposition~\ref{prop:payment} (Appendix \ref{app:prop2}) formalizes this decomposition into direct and PV effects. 

Finally, Table~\ref{tab:comparative-statics} summarizes the comparative statistics for investment, net demand, payment, and surplus. Changes in $c^g$ affect outcomes only through PV investment, whereas changes in $\pi_\tau^\pm$ combine direct and PV effects. Because $P^{\mbox{\tiny NEM}}$ and $S^*$ aggregate outcomes across periods, the reported comparative statistics should be interpreted as changes in total payment or surplus contributions summed over the periods in the relevant regime.

\begin{table}[!t]
\centering
\caption{Comparative statics analysis}
\label{tab:comparative-statics}
\begin{tabular}{@{}l*{9}{c}@{}}
\toprule
Variable & 
\multicolumn{3}{c}{$c^{g} \uparrow$} &
\multicolumn{3}{c}{$\pi_{\tau}^{+} \uparrow$} &
\multicolumn{3}{c}{$\pi_{\tau}^{-} \uparrow$} \\
\cmidrule(r){2-4} \cmidrule(lr){5-7} \cmidrule(l){8-10}
Region
& $+$ & $0$ & $-$ & $+$ & $0$ & $-$ & $+$ & $0$ & $-$ \\
\midrule
$d_{\tau}^{*}$ &
- & $\downarrow$ & $\downarrow$ &
$\downarrow$ & $\uparrow$ & - &
- & $\uparrow$ & $\downarrow$ \\
$d_{\tau}^{+*}-d_{\tau}^{-*}$ &
$\uparrow$ & - & $\uparrow$ &
$\downarrow$ & - & $\downarrow$ &
$\downarrow$ & - & $\downarrow$ \\
\midrule
$d_{t}^{*}$ &
& & &
- & $\uparrow$ & - &
- & $\uparrow$ & - \\
$d_{t}^{+*}-d_{t}^{-*}$ &
& & &
$\downarrow$ & - & $\downarrow$ &
$\downarrow$ & - & $\downarrow$ \\
\midrule
$P^{\mathrm{NEM}^*}$ &
$\uparrow$ & - & $\uparrow$ &
\(\text{X}\) & - & $\downarrow$ &
$\downarrow$ & - & $\downarrow$ \\
$S^{*}$ &
$\downarrow$ & $\downarrow$ & $\downarrow$ &
\(\text{X}\) & $\uparrow$ & $\uparrow$ &
$\uparrow$ & $\uparrow$ & $\uparrow$ \\
\bottomrule
\end{tabular}

\vspace{0.4em}

{\small
$\uparrow$ increase; $\downarrow$ decrease; - Unchanged; \(\text{X}\) Indeterminate.
}
\end{table}

\section{Case Study - A Household in Massachusetts}\label{sec:CaseStudy} 
We conduct a case study for a single family home in Massachusetts under a NEM tariff. To generate the parameters necessary for this case study, we use simulated hourly demand profiles and PV generation data for 2018 from the NREL ResStock dataset \cite{NREL:24}. We parameterize the utility function of every period $t$ (and the implied inverse-demand curves used in the surplus and payment calculations), using the same estimation approach as in the appendix of Alahmed and Tong \cite{Alahmed&Tong:22ACMSEIR}. To represent uncertainty in PV output within each settlement period, we modeled the capacity factor using a  normal random variable clipped to $[0,1]$: for each period $t$, we take the mean capacity factor from the profile in the ResStock simulated data and choose the standard deviation to match the month-specific standard deviation for capacity factors reported by Campbell et al.\cite{Campbell:24SD}. This did not exactly match the time scale of our monthly peak/off-peak periods, but provided a rough scale for the standard deviation of the capacity factors and was more appropriate than using a standard deviation derived for yearly, daily, or hourly data. Additional details on parameter choices are provided in Appendix \ref{app:empiric}.

We evaluated household operation and investment under five NEM tariffs. Each has different price settings based on summer/non-summer and peak/off-peak designations. Table~\ref{tab:price_parameters} summarizes the import and export price parameters used in the case study. For those with summer/non-summer differences, summer is defined as June-August. The \emph{Symmetric} prices are based upon current Massachusetts rates, and the compensation for exports is equal to that of imports. There is no peak/offpeak designation, and although this tariff has a peak period, it is the only period and captures all hours of the day. \emph{Proposed} prices are based upon a current proposal for time-of-use rates in Massachusetts. The billing periods are distinguished by peak/off-peak periods, with peak hours between 3pm and 8pm \cite{MARateTaskForce:24MDPU}. Similarly to \emph{Symmetric}, import prices under \emph{Proposed} are equal to export prices. \emph{Late} differs from \emph{Proposed} only in that the peak hours are shifted from 3-8pm to 6-11pm. We include \emph{Late} to illustrate the effect that the choice of peak hours has on solar profitability. Finally, \emph{Asymmetric} and \emph{Prop.-Asym.} tariffs are not based upon current prices or suggested prices in Massachusetts. We constructed them to illustrate how prosumer PV investment and operations for asymmetrical prices that are similar to current and proposed tariffs.

\begin{table}[t]
\centering
\caption{Import and export price parameters by tariff design}
\label{tab:price_parameters}
\resizebox{\columnwidth}{!}{%
\begin{tabular}{lcccccccc}
\toprule
 & \multicolumn{4}{c}{\textbf{Summer}} & \multicolumn{4}{c}{\textbf{Non-Summer}} \\
\cmidrule(lr){2-5} \cmidrule(lr){6-9}
\textbf{Tariff} 
& $\pi^{+}_{pk}$ & $\pi^{+}_{op}$ & $\pi^{-}_{pk}$ & $\pi^{-}_{op}$
& $\pi^{+}_{pk}$ & $\pi^{+}_{op}$ & $\pi^{-}_{pk}$ & $\pi^{-}_{op}$ \\
\midrule
Symmetric
& 0.35 & - & 0.35 & -
& 0.35 & - & 0.35 & - \\

Asymmetric
& 0.35 & 0.35 & 0.16 & 0.16
& 0.35 & 0.35 & 0.16 & 0.16 \\

Proposed
& 0.73 & 0.29 & 0.73 & 0.28
& 0.48 & 0.29 & 0.48 & 0.29 \\

Late
& 0.73 & 0.29 & 0.73 & 0.28
& 0.48 & 0.29 & 0.48 & 0.29 \\

Prop.-Asym.
& 0.55 & 0.45 & 0.25 & 0.20
& 0.35 & 0.24 & 0.15 & 0.10 \\

\bottomrule
\end{tabular}%
}
\end{table}

Figure~\ref{fig:marg_value_emp} illustrates the marginal value of an incremental increase in installed solar capacity, \eqref{eq:value_solar_exp}, to the prosumer in our case study for each tariff in Table \ref{tab:price_parameters}. 
\begin{figure}
    \centering
    \includegraphics[width=1\linewidth]{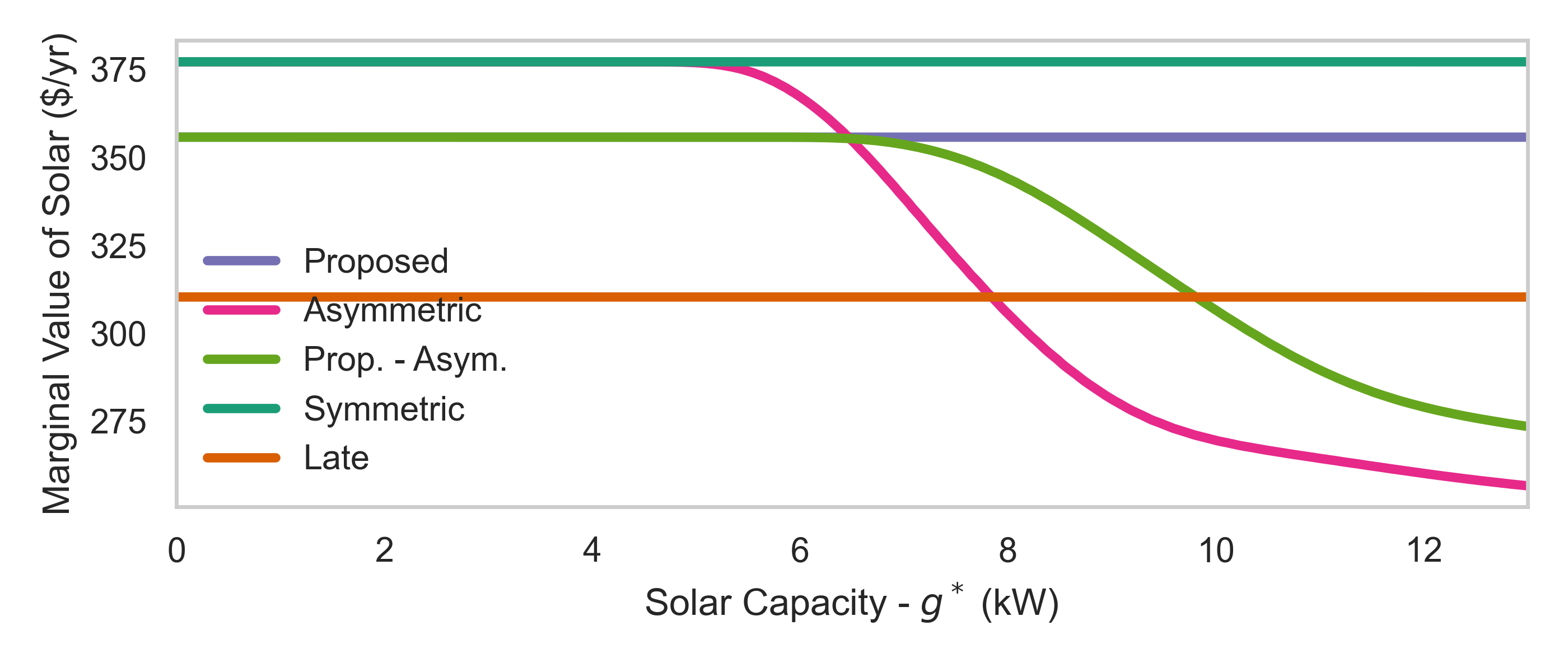}
    \caption{Marginal value of an additional unit of solar capacity. }
    \label{fig:marg_value_emp}
\end{figure} When prices are symmetric, marginal values are constant, and solar investments will be all-or-nothing. There is almost no net-zero region as $\bar d_t^+=\bar d_t^-+\epsilon$ with $\epsilon = 10^{-6}$. In contrast, when prices are asymmetric, the net-zero regime gives the marginal value function, $F(\cdot)$, curvature. At some point for both asymmetric tariffs, PV generation is sufficient to push some periods into the net-zero or export regimes. This decreases the profitability of larger installations under asymmetric tariffs relative to symmetric tariffs and smooths PV capacity changes so that optimal investment is no longer all-or-nothing. Instead, the jump is limited and elsewhere optimal capacity moves continuously with parameter changes (as discussed in Proposition \ref{prop:monotonicity}).

Another critical effect of tariffs on optimal investment is shown by the difference between \emph{Proposed} and \emph{Late}. Since the peak period for \emph{Late} occurs outside of high solar irradiance hours, PV generation displaces lower value electricity import and earns a lower payment when exporting than under \emph{Proposed}.

For \emph{Symmetric}, \emph{Proposed}, and \emph{Late}, Figure~\ref{fig:contour} provides contour plots of optimal installed solar capacity, total net demand with endogenized PV capacity, total net demand when PV capacity is fixed, total prosumer payment with endogenized PV capacity, and total prosumer payment when PV capacity is fixed. For fixed PV capacity, we use the optimal PV capacity at the original tariffs, \ie the tariffs without any changes to import or export prices. We also set $c^g = 342$ \$/kW and $\bar{g} = 13$ kW (Appendix \ref{app:empiric}). Import prices vary from baseline in all periods by 0 to 0.15 \$/kWh, and export prices vary from baseline in all periods by -0.15 to 0 \$/kWh (baselines are detailed in Table \ref{tab:price_parameters}). \begin{figure}
    \centering
    \includegraphics[width=1\linewidth]{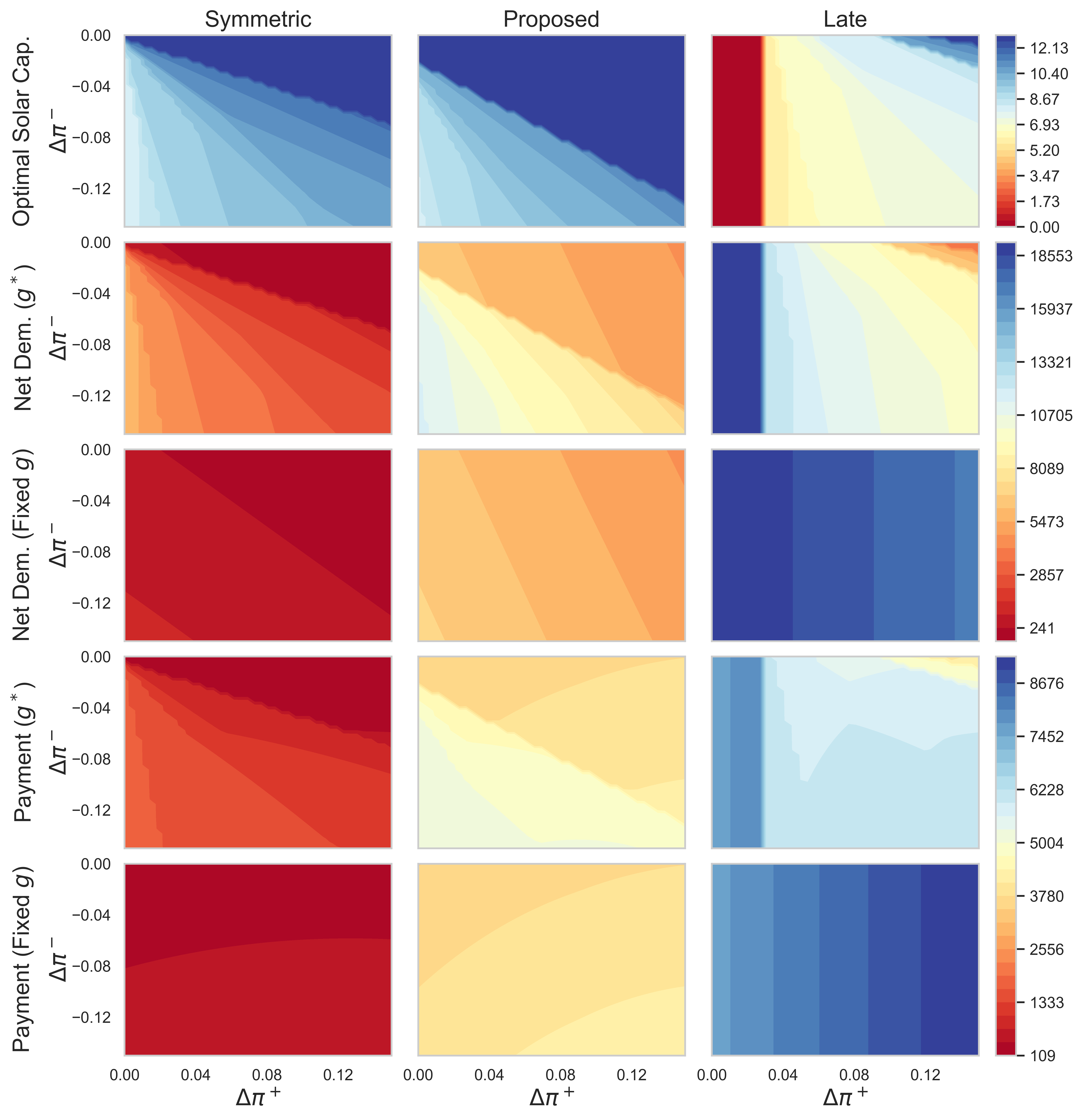}
    \caption{Contour plots comparing the impact of price changes on the prosumer's optimal solar capacity, total net demand with optimal PV capacity, total net demand with fixed PV capacity, total payment with optimal PV capacity, and total payment with fixed PV capacity.}
    \label{fig:contour}
\end{figure}

Figure~\ref{fig:contour} highlights several results shown in Section \ref{sec:Optimal}. \textbf{(i)} Optimal PV capacity is monotonically increasing in the import and export prices, as indicated by Prop. \ref{prop:monotonicity}. \textbf{(ii)} Net demand decreases in import and export prices due to an increase in generation and a decrease in consumption, reflecting the negative sign of the derivative in \eqref{eq:nettrade_derivative_final}. \textbf{(iii)} The combination of decreasing demand and increasing solar generation leads to a non-monotonic effect of import prices on payment, seen in Prop. \ref{prop:payment} and visible in all three optimal payment plots. The \emph{Late} plots show this the most clearly, where the transition from no PV to positive PV corresponds to a sharp decrease in payment in the optimal payment plots. Payment then increases for some $\pi^-$ perturbations but not all. While not shown here, prosumer surplus decreases (increases) monotonically with import (resp. export) price. Thus, there may be certain regions for which an increase in import price both lowers the payment of the prosumer and decreases their payment to the DU. \textbf{(iv)} Comparing the endogenous-PV rows to their fixed-PV counterparts (rows 2 vs.\ 3 and rows 4 vs.\ 5) shows that allowing PV capacity to adjust materially changes the implied sensitivity of both net demand and payment to tariff reforms. In particular, an increase in the import price produces a smaller increase in payment under endogenous PV than under fixed PV (and in some regions a larger decrease), whereas a decrease in the export price produces a larger increase in payment under endogenous PV than under fixed PV. Figure \ref{fig:contour} demonstrates that the importance of endogenous PV capacity to changes in net demand and payment discussed in Section \ref{sec:Optimal} are not just theoretical artifacts but apply to a realistic setting.

\section{Conclusion}\label{sec:conclusion}
This paper characterizes the joint investment and operational decisions of flexible prosumers with behind-the-meter solar generation under asymmetric, time-varying NEM tariffs. By endogenizing PV capacity, we derive conditions that link optimal investment to the marginal value of solar generation across three operational regimes: import, net-zero self-consumption, and export.

Asymmetric import and export prices give rise to an intermediate net-zero self-consumption region, which plays a central role in shaping investment incentives, which depend on how generation is distributed across these regimes. Optimal PV capacity increases with import and export prices, but its sensitivity to price changes is attenuated when a larger share of generation falls in the net-zero self-consumption regime. In this region, additional generation is valued at declining marginal utility rather than fixed market prices, which dampens investment responses.  

Endogenizing PV capacity reveals a PV effect: tariff changes in one period or regime affect net demand and consumption in all periods with positive generation through adjustments in optimal solar capacity. 
The PV effect also has direct implications for DU cost recovery. An increase in the import price raises payments per unit imported, but it may also induce additional PV investment that reduces net imports across generating periods, attenuating or even decreasing them. By contrast, decreases in the export price increase DU revenues through both the direct and PV effects. Ignoring investment responses therefore overstates the revenue impact of import price increases and understates the impact of an export price decrease.

A case study based on Massachusetts demand, PV generation, and tariffs illustrates our analytic findings under realistic conditions. The results show how tariff asymmetry and the timing of peak periods relative to solar production shape the marginal value of PV, smooth investment responses, and alter net demand and payments. Together, the analytical and numerical findings highlight the importance of accounting for endogenous PV investment when evaluating NEM tariffs.

This work opens several avenues for future research. Successor NEM tariffs increase incentives for storage adoption, motivating joint investment and operational analysis of PV–battery systems. In addition, relaxing the assumption of separable utility across periods would allow for further insight into prosumer behavior under NEM tariffs.

\section{AI Usage Disclosure}
During the preparation of this manuscript, the authors used ChatGPT for content assistance, specifically in drafting and condensing exposition. The authors reviewed and revised the material generated and assume full responsibility for the content of this publication.
\newpage

\appendix 
 \subsection{Statement and Proof of  Lemma \ref{lem:bireform}} \label{app:bireform }
\begin{lemma}\label{lem:bireform}
Suppose $\pi_t^+> \pi_t^-$ for all $t$. Then, for any solution to \eqref{eqn:prosumer_prob} that is optimal, we have
$
d_{t}^+ \cdot d_{t}^- = 0 \quad \forall\, t.
$
 \end{lemma}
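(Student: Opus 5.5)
The plan is an exchange (perturbation) argument. We show that any feasible point with simultaneous positive import and export in some period can be strictly improved. Concretely, the lemma should be read for the relaxed problem \eqref{eqn:prosumer_opt}--\eqref{eqn:solar_nonneg}, i.e.\ without \eqref{cons:bilinear}. The claim is that every optimal solution of the relaxation automatically satisfies $d_t^+ d_t^-=0$. Hence the relaxation and \eqref{eqn:prosumer_prob} share the same optimal solutions, and optimal solutions of \eqref{eqn:prosumer_prob} trivially satisfy the constraint. So the substantive step is the relaxation statement. We argue by contradiction: suppose $(d^+,d^-,g,d)$ is optimal for the relaxed problem and, for some period $\tau$ and some realization (set of outcomes) of $\psi_\tau$, both $d_\tau^+>0$ and $d_\tau^->0$.

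The key construction is as follows. Set $\delta=\min\{d_\tau^+,d_\tau^-\}>0$ and define $\tilde d_\tau^+=d_\tau^+-\delta$ and $\tilde d_\tau^-=d_\tau^--\delta$, leaving $g$, $d_t$, and all other variables unchanged.

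1. \emph{Feasibility.} Both new variables remain nonnegative by the choice of $\delta$, so \eqref{eqn:solar_nonneg} holds. The difference $\tilde d_\tau^+-\tilde d_\tau^-=d_\tau^+-d_\tau^-$ is unchanged, so \eqref{eqn:cons_balance} still holds. \eqref{eqn:solar_cap} is untouched. Moreover, one of $\tilde d_\tau^\pm$ is now zero, so the bilinear constraint also holds at $\tau$.

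2. \emph{Objective change.} In that period and realization the objective changes by $-\pi_\tau^+(-\delta)+\pi_\tau^-(-\delta)=\delta(\pi_\tau^+-\pi_\tau^-)>0$, using the hypothesis $\pi_\tau^+>\pi_\tau^-$. Utility and the investment cost are unchanged.

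Because the objective contains an expectation over $\psi_t$, the decision variables $d_t,d_t^\pm$ are understood as functions of the realized $\psi_t$, i.e.\ recourse decisions, while $g$ is first stage. I would apply the modification pointwise on the event $A=\{d_\tau^+>0,\ d_\tau^->0\}$. Feasibility is pointwise, so the modified policy is feasible. The expected objective increases by $\mathbb E[\delta(\pi_\tau^+-\pi_\tau^-)\mathbbm 1_A]$.

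The main (mild) obstacle is measure-theoretic. If $A$ has probability zero, the expected objective does not strictly increase, so strict suboptimality fails. I would handle this by stating the lemma up to null sets: optimal solutions satisfy the bilinear constraint almost surely. Alternatively, I would note that among optimal solutions one can always replace the policy by the modified one without loss, so the bilinear constraint may be dropped. If $\mathbb P(A)>0$, then $\delta\mathbbm 1_A$ has positive expectation, which gives a strict improvement and contradicts optimality. This also justifies the paper's reformulation and the KKT analysis leading to Lemma~\ref{lem:opt_decision}.
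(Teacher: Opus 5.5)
Your proposal is correct and uses the same exchange argument as the paper: lower $d_\tau^+$ and $d_\tau^-$ by a common amount, which keeps the balance constraint satisfied and raises the objective by that amount times $\pi_\tau^+-\pi_\tau^->0$. Your choice $\delta=\min\{d_\tau^+,d_\tau^-\}$ replaces the paper's loosely phrased ``repeat the reduction'' step, and your pointwise treatment of the recourse variables with the almost-sure qualification makes precise what the paper's per-realization argument leaves implicit; these are small refinements, not a different route.
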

Fix $t$ and consider any feasible point.
If at most one of $d_{t}^+,d_{t}^-$ is positive, the condition already holds. 
So assume $d_{t}^+>0$ and $d_{t}^->0$.

Pick any $\varepsilon\in(0,\min\{d_{t}^+,d_{t}^-\})$ and define
$
\tilde d_{t}^+ := d_{t}^+ - \varepsilon,
\tilde d_{t}^- := d_{t}^- - \varepsilon.
$
Then $\tilde d_{t}^+\ge 0$, $\tilde d_{t}^-\ge 0$, and
\begin{align*}
\tilde d_{t}^+ - \tilde d_{t}^- 
= (d_{t}^+ - \varepsilon) - (d_{t}^- - \varepsilon) 
= d_{t}^+ - d_{t}^- 
= d_{t} - \psi_{t}g,
\end{align*}
so the power-balance constraint \eqref{eqn:cons_balance} remains satisfied. 
Since we are only changing $d_{t}^+,d_{t}^-$ and keeping $d_{t}$ fixed, the electricity utility term $U_{t}(d_{t})$ is unchanged, 
as are $-c^g g$ and $\pi^c$. 
The only part of the objective in \eqref{eqn:prosumer_opt} that changes in period $t$ is
$
-\pi_t^+ d_{t}^+ + \pi_t^- d_{t}^-.
$
Under the modified pair,
\begin{align*}
-\pi_t^+ \tilde d_{t}^+ + \pi_t^- \tilde d_{t}^-
&= \big(-\pi_t^+ d_{t}^+ + \pi_t^- d_{t}^-\big) + \varepsilon(\pi_t^+ - \pi_t^-).
\end{align*}
Because $\pi_t^+\ge \pi_t^-$, we have $\varepsilon(\pi_t^+ - \pi_t^-)\ge 0$, 
with strict inequality if $\pi_t^+>\pi_t^-$. 
Thus the modified point is feasible and yields an objective value that is weakly larger 
(strictly larger when $\pi_t^+>\pi_t^-$) than the original one. 
Repeating this reduction step until one of $d_{t}^+,d_{t}^-$ hits zero produces a feasible point 
with weakly higher objective value and $d_{t}^+ d_{t}^-\!=\!0$. 
Hence no optimal solution can have $d_{t}^+\!>\!0$ and $d_{t}^-\!>\!0$.
\hfill $\blacksquare$

\textbf{Lagrangian}
The Lagrangian associated with problem \eqref{eqn:prosumer_prob} is
\begin{align}
\mathcal{L}
\!=& 
\mathbb E\Big[\!
\sum_{t\in\mathcal T}(
U_{t}(d_{t})
\!-\!\pi_t^+ d_{t}^+ + \pi_t^- d_{t}^-
)
\Big]\!
- \!c^g g\nonumber\\
&\!+ \!\mathbb E\Big[\!
\sum_{t\in\mathcal T} 
(-\alpha_{t}d_{t}^+ \!- \!\beta_{t}d_{t}^-
\!+\! 
\gamma_{t}(d_{t}^+ \!-\! d_{t}^- \!-\! d_{t} \!+\! \psi_{t} g))
\Big] \nonumber\\
&\!-\! \pi^c -\kappa g\! + \!\rho (g - \bar g)
\label{lag:new_final2}
\end{align}
where $\alpha_{t},\beta_{t},\kappa,\rho \ge 0$ and $\gamma_{t}\in\mathbb R$.
The expectation is taken with respect to the distribution of capacity factors $\psi_{t}$.

\textbf{Stationarity:}
The stationarity conditions are
\begin{subequations}
\begin{align}
\frac{\partial\mathcal L}{\partial d_{t}}
&\!=\! U'_{t}(d_{t})\!-\!\gamma_{t}\!=\!0,\label{stat:cons}\\ 
\frac{\partial\mathcal L}{\partial d_{t}^+}
\!&=\! -\pi_t^+\!-\!\alpha_{t}\!+\!\gamma_{t}\!=\!0,
\frac{\partial\mathcal L}{\partial d_{t}^-}
\!=\! \pi_t^-\!-\!\beta_{t}\!-\!\gamma_{t}=0,\label{stat:demand}\\
\frac{\partial\mathcal L}{\partial g}
&\!=\! -c^g\!-\!\kappa\!+\!\rho
 \!+\!\mathbb E\!\Big[\sum_{t\in\mathcal T}\gamma_{t}\psi_{t}\Big]=0.
\label{stat:minus_g}
\end{align}
\end{subequations}

\textbf{Primal Feasibility:} Primal feasibility entails,
\begin{align*}
d_{t}^+ \ge 0,\ 
d_{t}^- \ge 0,\ 
0 \le g \le \bar g,\ 
d_{t}^+ - d_{t}^- - d_{t} + \psi_{t} g = 0,
\end{align*}
holding almost surely.

\textbf{Dual Feasibility:} These are are
$
\alpha_{t},\ \beta_{t},\ \kappa,\ \rho \ge 0,
$.

\textbf{Complementary Slackness:}
\begin{align*}
\alpha_{t} d_{t}^+ = 0,\quad
\beta_{t} d_{t}^- = 0,\quad
\kappa g = 0,\quad
\rho( g - \bar g ) = 0.
\end{align*}
\vspace{-8pt} \subsection{Proof of Lemma \ref{lem:opt_decision}:} 
From \eqref{stat:cons} and \eqref{stat:demand},
\begin{align*}
\gamma_{t}
&= U'_{t}(d_{t})
= \pi_t^+\!+\!\alpha_{t}
= \pi_t^-\!-\!\beta_{t},
\\
\rho\!-\!\kappa
&= c^g-\mathbb E\!\Big[\sum_{t\in\mathcal T}\gamma_{t}\psi_{t}\Big].
\end{align*}
From complementary slackness, if $d_{t}^->0$ then $\beta_{t}=0$. Analogously, if $d_{t}^+>0$ then $\alpha_{t}=0$. Thus,
\begin{subequations}
\begin{align*}
d_{t}^-\!>\!0 &\!\implies\!\! \gamma_{t}\!=\!U'_{t}(d_{t})\!=\!\pi_t^-,\  
d_{t}^+\!>0\! \!\implies \!\!\gamma_{t}\!=\!U'_{t}(d_{t})=\pi_t^+. 
\end{align*}
Next use the balance constraint $d^+-d^-=d-g\psi$.
If $d>g\psi$, then $d-g\psi>0$ so we must have $d^+>0$ (and therefore $d^-=0$); thus
$d=\bar d^+(\pi^+)$, and substituting into the balance constraint gives
$d^{+}=\bar d^+(\pi^+)-g\psi$, $d^-=0$. Feasibility requires
$\bar d^+(\pi^+) > g\psi$, which is exactly the import condition.

If $d<g\psi$, then $d-g\psi<0$ so we must have $d^->0$ (and therefore $d^+=0$); thus
$d=\bar d^-(\pi^-)$ and $d^- = g\psi-\bar d^-(\pi^-)$, $d^+=0$. Feasibility then requires
$g\psi>\bar d^-(\pi^-)$.

Finally, if neither feasibility condition holds, \ie
$g\psi\in[\bar d^+(\pi^+),\,\bar d^-(\pi^-)]$, then the two strict cases above are
infeasible. The only feasible point satisfying the balance constraint with
$d^+\!=\!d^-\!=\!0$ is $d\!=\!g\psi$, which yields the net-zero regime. Substituting into
the balance constraint gives $d^+=d^-=0$.
Combining the three cases yields \eqref{eq:net_demand_new}-\eqref{eq:export_new}.
\end{subequations}
\hfill $\blacksquare$
\vspace{-8pt} \subsection{Proof of Theorem \ref{thm:optimality_cond}:}

 Any optimal solution satisfies the stationarity condition \eqref{kkt:2}. In particular, if $0<g<\bar g$ then
$\kappa=\rho=0$ and hence,
\vspace{-2pt}
\begin{align}
c^g
=
\mathbb E\Big[
\sum_{t\in\mathcal T}\gamma_{t}\psi_{t}
\Big], \label{eqn:cgequals}
\end{align}
\vspace{-10pt}

\noindent which is equivalent to \eqref{eq:value_solar_theta_exp}
Using $\gamma_t = U_t'(d_t)$ from \eqref{kkt:1} and complementary slackness from
\eqref{kkt:3}-\eqref{kkt:4}, $\gamma_t$ equals $\pi_t^+$ in the import regime,
$\pi_t^-$ in the export regime, and $U_t'(\psi_t g)$ in the net-zero regime.
Using Lemma~\ref{lem:opt_decision} to express these regimes as conditions on
$\psi_t g$ relative to $\bar d_t^+$ and $\bar d_t^-$, the right-hand side of
\eqref{eqn:cgequals} can be re-expressed as $F(g)$ in \eqref{eq:value_solar_exp}. Thus any
interior optimum $g^\dagger$ satisfies $c^g=F(g^\dagger)$.

Finally, since the feasible set for $g$ is the interval $[0,\bar g]$, if an
interior solution exists it must be $g^\dagger$, and otherwise the optimum is at
a boundary. Equivalently, the optimal capacity is the projection of $g^\dagger$
onto $[0,\bar g]$,
\[
g^*=\max\{0,\min\{\bar g,\,g^\dagger\}\},
\]
which is \eqref{eq:optdagger}.
\hfill $\blacksquare$

\vspace{-8pt} \subsection{Proof of Corollary \ref{corr:optdaggernvert}}
For any $g\in[0,\underline d^+]$ and any realization of $\psi_{t}\in[0,\bar{\psi}_{t}]$,
then $\psi_{t}g\le\bar d_t^+$ for all $t$, implying that all periods remain in the import
regime almost surely. Hence,
$
F(g)
=
F(0),
\quad
g\in[0,\underline d^+],
$
so $F(\cdot)$ is constant on $[0,\underline d^+]$ and therefore not invertible on this interval. For any $g>\underline d^+$, there exists at least one period $t$ such that
\[
\mathbb P\big(\bar d_t^-<\psi_{t}g<\bar d_t^+\big)>0.
\]
Since boundary events have probability zero, differentiation under the expectation yields
\begin{align}
\frac{\partial F}{\partial g}(g^*;\pi)
\!=\!
\mathbb E\!\big[
\sum_{t\in\mathcal T}\psi_{t}^2\,U''_{t}(\psi_{t}g)\,
\mathbbm 1\{\bar d_t^-\!<\!\psi_{t}g\!<\!\bar d_t^+\}
\big].
\label{eq:Fprime_continuous}
\end{align}
The contribution for every period is non-positive, and at least one is strictly negative with positive
probability. Thus,
\begin{align}
\frac{\partial F}{\partial g}(g^*;\pi)<0
\qquad
\forall g\in(\underline d^+,\bar g),
\label{eq:F_strict_decreasing}
\end{align}
so $F(\cdot)$ is strictly decreasing and therefore invertible on
$(\underline d^+,\bar g)$.
Consequently, the equation $F(g)=c^g$ has a unique interior solution
$g^\dagger$ whenever
$c^g\in\big(F(\bar g),\,F(\underline d^+)\big)$ and, by extension, whenever $c^g<F(0).
$ 
The flat region $[0,\underline d^+]$ is the only source of non-uniqueness; beyond this
region, continuous $\psi$ eliminates all discrete jumps in $F(\cdot)$.

\vspace{-8pt} \subsection{Proof of Proposition \ref{prop:monotonicity}}
Recall that $F(\cdot)$ is weakly decreasing in $g$ (strictly decreasing whenever the net-zero region has positive probability), while it is weakly increasing in each $\pi_\tau^\pm$.

\textit{Case 1 (Capacity at bounds).}
If $c^g\notin[F(\bar g),F(0)]$, then \eqref{eq:value_solar_theta_exp} has no interior solution. So $g^*\in\{0,\bar g\}$. For local perturbations of $(\pi_\tau^+,\pi_\tau^-)$ and $c^g$ that keep $c^g\notin[F(\bar g),F(0)]$, the argmax remains at the same boundary point, hence
$
d g^*/d\pi_\tau^+
=
d g^*/d\pi_\tau^-
=
d g^*/d c^g
=0.
$

\textit{Case 3 (Entry/Exit points).}
If $F(0)=c^g$, the boundary solution $g^*=0$ is optimal. For one-sided perturbations that move into the interior region (i.e., make $c^g\in(F(\bar g),F(0))$), the optimal $g^*$ becomes interior and its right-derivatives are given by the interior formulas in Case 2 (evaluated at the right limit), while the left-derivatives remain $0$ since $g^*$ stays pinned at $0$. Thus the left- and right-hand derivatives are $0$ and \eqref{eq:dg_dpi_plus_explicit}-\eqref{eq:dg_dpi_minus_explicit} for $\pi_\tau^\pm$, and \eqref{eq:dg_solar_price_explicit} and $0$ for $c^g$. The same argument applies at $F(\bar g)=c^g$, with left/right reversed since the boundary solution is then $g^*=\bar g$.

\textit{Case 2 (Interior capacity).}
Now suppose $c^g\in(F(\bar g),F(0))$ and $g^*\not\in\{\bar d_t^+\}_{t\in\mathcal T^g}$. Then $g^*\in(0,\bar g)$ and $F$ is differentiable at $g^*$, so differentiating $F(g^*)-c^g$ with respect to any parameter $\theta\in\{\pi_\tau^+,\pi_\tau^-,c^g\}$ yields
\begin{align}
0=\frac{\partial F}{\partial g}(g^*;\pi)\frac{d g^*}{d\theta}+\frac{\partial}{\partial \theta}\big(F(g^*;\pi)-c^g\big),
\label{eq:IFT_core_new}
\end{align}
and hence $d g^*/d\theta=-(\partial_\theta(F-c^g))/(\partial_g F)$.

Because $\psi_t$ is continuously distributed, regime boundary events have probability zero, so we may differentiate under the expectation. On any neighborhood with no regime switches,
\begin{align}
\frac{\partial F}{\partial g}(g^*;\pi)
=
\mathbbm E\!\Big[\sum_{t:\,\psi_t g^* \in (\bar d_t^+,\bar d_t^-)}
\psi_t^2\,U_t''(\psi_t g^*)\Big]
\;<\;0,
\label{eq:Fg_new}
\end{align}
since $U_t''<0$.

For $\pi_\tau^+$, only the import regime in period $\tau$ depends on $\pi_\tau^+$,
\begin{align}
\frac{\partial F}{\partial \pi_\tau^+}(g^*;\pi)
=
\mathbbm E\!\big[\psi_\tau\,\mathbbm 1\{\psi_\tau g^*<\bar d_\tau^+\}\big]
\;\ge\;0,
\label{eq:Fpi_plus_new}
\end{align}
and for $\pi_\tau^-$, only the export regime depends on $\pi_\tau^-$,
\begin{align}
\frac{\partial F}{\partial \pi_\tau^-}(g^*;\pi)
=
\mathbbm E\!\big[\psi_\tau\,\mathbbm 1\{\psi_\tau g^*>\bar d_\tau^-\}\big]
\;\ge\;0.
\label{eq:Fpi_minus_new}
\end{align}
Finally, $\partial(F-c^g)/\partial c^g=-1$. Substituting \eqref{eq:Fg_new}-\eqref{eq:Fpi_minus_new} into \eqref{eq:IFT_core_new} yields \eqref{eq:dg_dpi_plus_explicit}-\eqref{eq:dg_solar_price_explicit}. Since \eqref{eq:Fg_new} is negative and \eqref{eq:Fpi_plus_new}-\eqref{eq:Fpi_minus_new} are positive, it follows that $g^*$ is weakly increasing in $\pi_\tau^\pm$ and weakly decreasing in $c^g$.

If $g^*\in\{\bar d_t^+\}_{t\in\mathcal T^g}$, $F$ need not be differentiable; applying the same argument to left and right limits yields the corresponding directional derivatives.
\hfill $\blacksquare$

\vspace{-12pt} \subsection{Proposition \ref{prop:payment} and its Proof}
\label{app:prop2}
\begin{proposition}\label{prop:payment}
At points where $g^*$ is differentiable, the expected NEM payment varies with changes in the import or export price in period $\tau$ as
\begin{align}
\frac{\partial }{\partial \pi_\tau^{\pm}}\mathbbm{E}[P^{\mbox{\tiny NEM}*}]
\!=\!&
\underbrace{\mathbbm{E}\Big[\pm d_{\tau}^{\pm*}\Big]
\!+\!\pi_\tau^{\pm}\,
\frac{1}{U''_{\tau}\!\big(\bar d_{\tau}^{\pm}(\pi_\tau^{\pm})\big)}\,
\mathbb P\!\left(d_{\tau}^{\pm*}>0\right)}_{\text{direct effect}}
\nonumber\\
&\hspace{-3pt}
\underbrace{-
\mathbbm{E}\Big[\hspace{-18pt}
\sum_{t:\psi_{t}g^*<\bar d_{t}^+(\pi_t^+)}\hspace{-18pt} \pi_t^+\psi_{t}
+\hspace{-12pt}
\sum_{t:\psi_{t}g^*>\bar d_{t}^-(\pi_t^-)}\hspace{-18pt} \pi_t^-\psi_{t}
\Big]
\!\cdot\!
\frac{\partial g^*}{\partial \pi_\tau^{\pm}}}_{\text{PV effect}}.
\label{eq:dPstar_dpi_plus_cross_explicit}
\end{align}
\end{proposition}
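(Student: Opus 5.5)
We will differentiate the expected payment at the optimum by writing it as a function of the prices and of $g^*$, then applying the chain rule. By Lemma~\ref{lem:opt_decision}, for a fixed realization of $\psi$ the optimal imports and exports in each period depend on $(\pi_t^+,\pi_t^-,g)$ only through the thresholds $\bar d_t^\pm(\pi_t^\pm)$ and the generation $\psi_t g$. Substituting into \eqref{eq:Pi_NEM_def} gives
\begin{align*}
\mathbb E[P^{\mbox{\tiny NEM}*}]=\pi^c+\sum_{t}\mathbb E\Big[\pi_t^+\big(\bar d_t^+-\psi_t g^*\big)\mathbbm 1\{\psi_t g^*<\bar d_t^+\}-\pi_t^-\big(\psi_t g^*-\bar d_t^-\big)\mathbbm 1\{\psi_t g^*>\bar d_t^-\}\Big].
\end{align*}
We treat this as $\Phi(\pi,g^*(\pi))$. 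The total derivative in $\pi_\tau^\pm$ is then $\partial_{\pi_\tau^\pm}\Phi+\partial_g\Phi\cdot \partial g^*/\partial\pi_\tau^\pm$. The first term will be the direct effect and the second the PV effect.

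\textbf{Partial derivative in $g$.} The integrand is continuous in $g$: the import and export terms vanish exactly at the regime boundaries. It is also piecewise linear in $g$, and since $\psi_t$ is continuously distributed the boundary events have probability zero (as in the proof of Corollary~\ref{corr:optdaggernvert}). We can therefore differentiate under the expectation. This gives
\begin{align*}
\partial_g\Phi=-\mathbb E\Big[\sum_{t:\psi_tg^*<\bar d_t^+}\pi_t^+\psi_t+\sum_{t:\psi_tg^*>\bar d_t^-}\pi_t^-\psi_t\Big],
\end{align*}
and multiplying by $\partial g^*/\partial\pi_\tau^\pm$ yields the PV effect term. The net-zero regime contributes nothing here, since there the payment is zero regardless of $g$.

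\textbf{Partial derivative in $\pi_\tau^\pm$ at fixed $g^*$.} Only period $\tau$ is affected. For $\pi_\tau^+$, the product rule on $\pi_\tau^+ d_\tau^{+*}$ gives two pieces:
\begin{itemize}
\item $\mathbb E[d_\tau^{+*}]$, from differentiating the price factor;
\item $\pi_\tau^+\,\partial\bar d_\tau^+/\partial\pi_\tau^+\cdot\mathbb P(\psi_\tau g^*<\bar d_\tau^+)$, from differentiating the threshold.
\end{itemize}
Since $\bar d_\tau^+=(U_\tau')^{-1}(\pi_\tau^+)$, the inverse function theorem gives $\partial\bar d_\tau^+/\partial\pi_\tau^+=1/U_\tau''(\bar d_\tau^+)$. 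Moving the regime boundary contributes no extra term, because the integrand vanishes there; a Leibniz boundary term would carry a zero factor. The event $\{\psi_\tau g^*<\bar d_\tau^+\}$ equals $\{d_\tau^{+*}>0\}$, which gives the stated form.

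The export case is symmetric. Differentiating $-\pi_\tau^- d_\tau^{-*}$ gives $-\mathbb E[d_\tau^{-*}]$. The threshold term is $-\pi_\tau^-\cdot(-\partial\bar d_\tau^-/\partial\pi_\tau^-)\,\mathbb P(d_\tau^{-*}>0)=\pi_\tau^-\,\mathbb P(d_\tau^{-*}>0)/U''_\tau(\bar d_\tau^-)$. This matches the $\pm$ notation in \eqref{eq:dPstar_dpi_plus_cross_explicit}.

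\textbf{Main obstacle.} The delicate step is justifying the interchange of differentiation and expectation. This requires $g^*$ to be differentiable, which the statement assumes and which Proposition~\ref{prop:monotonicity} (Case 2) supplies away from $g^*\in\{\bar d_t^+\}$ and the entry and exit points. It also requires that the kinks at regime boundaries carry zero probability. I would argue this by noting that the integrand is Lipschitz in $(g,\pi_\tau^\pm)$ with an integrable bound proportional to $\bar\psi_t$. Its derivative exists except on a $\psi$-null set, so dominated convergence applies to the difference quotients. Continuity of the integrand across regime boundaries is what ensures that no boundary contributions appear.
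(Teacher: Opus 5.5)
Your proposal is correct and follows essentially the same route as the paper. Both write the expected payment as $\bar P(g^*(\pi),\pi)$ using the piecewise forms from Lemma~\ref{lem:opt_decision} and apply the chain rule. The direct effect then comes from the product rule with $d\bar d_\tau^\pm/d\pi_\tau^\pm = 1/U_\tau''(\bar d_\tau^\pm)$, and the PV effect from the almost-sure derivatives $\partial d_t^{\pm*}/\partial g$. Your dominated-convergence argument for swapping derivative and expectation, and your remark that continuity across regime boundaries rules out Leibniz boundary terms, make explicit a step the paper only asserts.
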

\noindent\textit{Proof:}
Write the (expected) optimal payment as $\mathbbm{E}[P^{\mbox{\tiny NEM}*}(\pi)]
=\bar P(g^*(\pi),\pi)$, where $\bar P(g,\pi)$ is defined by \eqref{eq:Pi_NEM_def}
with $(d_t^{+*}(g,\pi),d_t^{-*}(g,\pi))$ given by Lemma \ref{lem:opt_decision}.
At any point where $g^*(\cdot)$ is differentiable, the chain rule gives
\begin{align}
\mathbbm{E}\Big[\frac{\partial}{\partial \pi_\tau^\pm}P^{\mbox{\tiny NEM}*}\Big]
=
\frac{\partial}{\partial \pi_\tau^\pm}\bar P(g^*,\pi)
+
\frac{\partial}{\partial g}\bar P(g^*,\pi)\cdot
\frac{\partial g^*}{\partial \pi_\tau^\pm}.
\label{eq:payment_chain_rule}
\end{align}

For the direct term, only the $\tau$-terms vary with
$\pi_\tau^\pm$ at fixed $g^*$. Using Lemma \ref{lem:opt_decision},
$d_{\tau}^{+*}(g^*,\pi)=[\bar d_\tau^+(\pi_\tau^+)-\psi_\tau g^*]^+$
and
$d_{\tau}^{-*}(g^*,\pi)=[\psi_\tau g^*-\bar d_\tau^-(\pi_\tau^-)]^+$.
Differentiating $\mathbb E[\pi_\tau^\pm d_\tau^{\pm*}]$ w.r.t.\ $\pi_\tau^\pm$ and using
$\frac{d}{d\pi_\tau^\pm}\bar d_\tau^\pm(\pi_\tau^\pm)=1/U_\tau''(\bar d_\tau^\pm(\pi_\tau^\pm))$
yields the direct-effect term in \eqref{eq:dPstar_dpi_plus_cross_explicit}.

For the PV term, differentiate $\bar P(g,\pi)$ w.r.t.\ $g$ using Lemma \ref{lem:opt_decision}:
$\partial d_t^{+*}/\partial g=-\psi_t\,\mathbbm 1\{\psi_t g<\bar d_t^+(\pi_t^+)\}$ and
$\partial d_t^{-*}/\partial g=\psi_t\,\mathbbm 1\{\psi_t g>\bar d_t^-(\pi_t^-)\}$ a.s.,
so
\begin{align}
\frac{\partial}{\partial g}\bar P(g^*,\pi)
=
-\mathbbm{E}\Big[\hspace{-18pt}
\sum_{t:\psi_{t}g^*<\bar d_{t}^+(\pi_t^+)}\hspace{-18pt} \pi_t^+\psi_{t}
+\hspace{-12pt}
\sum_{t:\psi_{t}g^*>\bar d_{t}^-(\pi_t^-)}\hspace{-18pt} \pi_t^-\psi_{t}
\Big].
\label{eq:dPdg_compact}
\end{align}
Substituting \eqref{eq:dPdg_compact} into \eqref{eq:payment_chain_rule} gives
\eqref{eq:dPstar_dpi_plus_cross_explicit}.
\hfill $\blacksquare$

The direct effect is exactly the effect of an import or export price change on the payment from import or export in period $t$ given a fixed $g$. The second effect, the PV effect, captures the indirect effect of a price change on imports and exports in all periods through the effect on optimal solar investment. One can see that there is an asymmetry in the signs of of effect for the two price changes. Even when the direct effect is positive, an increase in price can potentially decrease payments through its impact on optimal investment. Fourth, because the PV effect is always negative, payments may be more negatively responsive to price increases relative to fixed-capacity benchmarks. In contrast, an increase in the export price unambiguously lowers payment. Endogenized PV capacity causes net demand to decrease in in both the export and import prices, altering contributions to payment from periods not subject to the relevant prices. Similar effect occur for prosumer surplus in the net import regime (see Table \ref{tab:comparative-statics}).

\vspace{-8pt} \subsection{Justification of Table \ref{tab:comparative-statics}.}
All comparative statics are local and evaluated at points where regimes do not switch; boundary events have probability zero under the assumed continuity of $\psi_t$. We use on Lemma~\ref{lem:opt_decision} for the piecewise form of $(d_t^*,d_t^{+*}-d_t^{-*})$ and on Proposition~\ref{prop:monotonicity} for the signs of $dg^*/d\pi_\tau^\pm$ and $dg^*/dc^g$.

\paragraph{Own-period quantities (first row)}
For period $\tau$, $d_\tau^*=\bar d_\tau^+(\pi_\tau^+)$ in the import region, $d_\tau^*=g^*\psi_\tau$ in the net-zero region, and $d_\tau^*=\bar d_\tau^-(\pi_\tau^-)$ in the export region. Hence, under $c^g\uparrow$, only the net-zero case responds locally and decreases since $dg^*/dc^g<0$. Under $\pi_\tau^+\uparrow$, $d_\tau^*$ decreases in import (by $\frac{d}{d\pi}\bar d_\tau^+<0$), increases in net-zero (since $dg^*/d\pi_\tau^+>0$), and is unchanged in export. Under $\pi_\tau^-\uparrow$, $d_\tau^*$ is unchanged in import, increases in net-zero, and decreases in export (by $\frac{d}{d\pi}\bar d_\tau^-<0$). These give the  of the table.

For net demand, $d_\tau^{+*}-d_\tau^{-*}=\bar d_\tau^+(\pi_\tau^+)-g^*\psi_\tau$ in import, $0$ in net-zero, and $\bar d_\tau^-(\pi_\tau^-)-g^*\psi_\tau$ in export. Thus $c^g\uparrow$ raises net demand in both import and export, while leaving the net-zero case unchanged. An increase in either $\pi_\tau^+$ or $\pi_\tau^-$ lowers net demand in both import and export (directly through $\bar d_\tau^\pm$ when applicable and indirectly through $g^*$)..

\paragraph{Other period quantities ($t\neq\tau$)}
When only $(\pi_\tau^+,\pi_\tau^-)$ changes, $\bar d_t^\pm(\pi_t^\pm)$ is unaffected for $t\neq\tau$; the only channel is $g^*$. Hence $d_t^*=g^*\psi_t$ increases in the net-zero region and is locally unchanged in import and export, while $d_t^{+*}-d_t^{-*}$ decreases in both import and export and remains zero in net-zero. This yields the middle block of the table.

\paragraph{Payments}
Payment responses follow from Proposition~\ref{prop:payment} by combining a regime-specific direct effect in period $\tau$ with a PV effect operating through $dg^*/d(\cdot)$.  
If period $\tau$ is net-zero, neither channel is active locally and payment is unchanged.  
If period $\tau$ is importing, an increase in $\pi_\tau^+$ raises the per-unit import charge but also induces additional PV investment that reduces net imports across generating periods, rendering the net payment effect indeterminate.  
If period $\tau$ is exporting, $\pi_\tau^+$ affects payment only through the PV channel, yielding a decrease.  
By contrast, increases in $\pi_\tau^-$ reduce expected payment whenever period $\tau$ trades, since both the direct and PV effects act in the same direction.  
Finally, when $c^g$ increases, only the PV channel operates via $dg^*/dc^g<0$, and the resulting payment changes follow directly from the sign of $\partial P^{\mathrm{NEM}}/\partial g$, as summarized in Table~\ref{tab:comparative-statics}.

\paragraph{Surplus}
Surplus results follow from the envelope theorem. Since $\partial S^*/\partial c^g=-g^*\le 0$, surplus weakly decreases when $c^g\uparrow$ in all regimes.  
For $\pi_\tau^+$, surplus combines a within-period effect (when $\tau$ is importing) with an investment response, making the net effect indeterminate in the import regime and weakly positive otherwise.  
For $\pi_\tau^-$, higher export compensation and the induced increase in $g^*$ both raise surplus, yielding $S^*\uparrow$ across regimes.

\vspace{-8pt} \subsection{Case Study}

\label{app:empiric}
\noindent\textit{PV Parameters:} We use $\bar{g}=13$ kW as this is the largest solar size seen in the ResStock dataset for single-family homes in Massachusetts. This is an admittedly high level of PV for a residential home in MA and results in lower net demand than we would see for a smaller upper bound on investment. We use the following calculation to derive the yearly amortized cost of a kW of PV:
\[
c^g = 0.7\cdot 12 \cdot C \frac{r}{1 - (1+r)^{-n}}
\]
where
$C=375$ \$/kW,
$r = 0.055/12$, and $n = 120$
corresponding to a $10$ year, $5.5\%$ loan compounded monthly. This is the median loan recorded by MASSCEC: \url{https://www.masscec.com/program/mass-solar-loan}. We multiple by 12 to convert into a yearly cost and 0.7 to represent the federal tax credit. We do not include state level incentives.

\noindent\textit{Utility functions and Demand Parameterization:} We use the same methodology as in the appendix of \cite{Alahmed&Tong:22ACMSEIR}, using quadratic utility functions that correspond to linear demand curves. We assume a price elasticity of $-0.25$ as this lies within the estimates found in \cite{Asadinejad:18EPSR}. We use the demand within ResStock data and a electricity price of $0.35$ \$/kW to calculate the utility function parameters as in \cite{Alahmed&Tong:22ACMSEIR}.

\bibliographystyle{ieeetr}
\bibliography{references}
\balance

\endgroup

\end{document}